\newcommand{\bbbr}{\ensuremath \mathbb{R}}
\newcommand{\Neigh}{\ensuremath \mathcal{N}}
\newcommand{\citep}[1]{\cite{#1}}
\newcommand{\citet}[1]{\citeA{#1}}
\newcommand{\citetp}[1]{\citeR{#1}}
\newcommand{\polylog}{\ensuremath \mathrm{polylog}}
\newcommand{\pa}{\ensuremath \mathrm{pa}}
\newcommand{\Ch}{\ensuremath \mathrm{Ch}}
\newcommand{\Mtilde}{\ensuremath \widetilde{M}}
\newcommand{\Itilde}{\ensuremath \widetilde{I}}
\newcommand{\C}{\ensuremath \mathcal{C}}
\newcommand{\Proj}[1]{\ensuremath \mathrm{Proj}\left( #1 \right)}
\newcommand{\Proji}[1]{\ensuremath \mathrm{Proj}_i\left( #1 \right)}
\newcommand{\argmax}{\ensuremath \arg\max}
\newcommand{\argmin}{\ensuremath \arg\min}
\newtheorem{definition}{Definition}
\newtheorem{theorem}{Theorem}
\newtheorem{lemma}{Lemma}
\newtheorem{claim}{Claim}
\newtheorem{corollary}{Corollary}
\begin{document}

\title{FPTAS for Mixed-Strategy Nash Equilibria in Tree Graphical Games and Their Generalizations}
\author{\name Luis E. Ortiz \email leortiz@umich.edu \\
       \addr Department of Computer and Information Science\\
        University of Michigan - Dearborn\\
        Dearborn, MI 48128 
\AND
\name Mohammad T. Irfan \email mirfan@bowdoin.edu \\
\addr Department of Computer Science\\
        Bowdoin College\\
        Brunswick, ME 04011}

\maketitle
\begin{abstract}
We provide the first fully polynomial time approximation scheme (FPTAS) for computing an approximate mixed-strategy Nash equilibrium in tree-structured graphical multi-hypermatrix games (GMhGs). GMhGs are generalizations of normal-form games, graphical games, graphical polymatrix games, and hypergraphical games. Computing an exact mixed-strategy Nash equilibria in graphical polymatrix games is PPAD-complete and thus generally believed to be intractable. In contrast, to the best of our knowledge, we are the first to establish an FPTAS for tree polymatrix games as well as tree graphical games when the number of actions is bounded by a constant. As a corollary, we give a quasi-polynomial time approximation scheme (quasi-PTAS) when the number of actions is bounded by the logarithm of the number of players.
\end{abstract}

\section{Introduction}

For over a decade, graphical games have been at the forefront of computational game theory. In a graphical game, a player's payoff is directly affected by her own action and those of her neighbors. This large class of games has played a critical role in establishing the hardness of computing a Nash equilibrium in general games~\cite{DaskalakisCACM}. It has also generated a great deal of interest in the AI community since~\citeA{Kearns_et_al_2001} drew a parallel with probabilistic graphical models in terms of succinct representation by exploiting the network structure. As a result, this is one of the select topics in computer science that has triggered a confluence of ideas from the theoretical computer science and AI communities. 

This paper contributes to this development by providing the first
fully polynomial-time approximation scheme (FPTAS) for approximate
Nash equilibrium computation in a generalized class of tree graphical games.
Tree-structured interactions
are natural in hierarchical settings. As often visualized in the
ubiquitous 
organizational chart of bureaucratic structures~\cite{weber1946}, hierarchical
organizations are arguably the most common managerial structures still
found
around the world, particularly in large corporations and governmental institutions (e.g.,
military), as well as in many social and religious
institutions. Supply chains are also commonplace, such as in agriculture~\cite<see,
e.g.,>{doi:10.1080/13675567.2010.518564}. Even within the context of
energy grids, the traditional electric power generation,
transmission, and distribution systems are tree-structured, and are
commonly 
modeled mathematically and computationally as such~\cite<see>[for a recent example]{Dvijotham2016}. 

\emph{Our algorithm eliminates the exponential dependency on the maximum degree of a node, a problem that has plagued research for 15 years since the inception of graphical games~\citep{Kearns_et_al_2001}.}

More generally, we consider the problem of \emph{computing approximate
  MSNE in GMhGs}, as defined by~\citeA{ortiz14}. We refer the reader
to Table~\ref{tab:acro} for a list of acronyms used throughout this paper. Roughly
speaking, in a GMhG, each player's payoff is the summation of several
local payoff hypermatrices defined with respect to each individual player's local
hypergraph. GMhGs generalize normal-form games, graphical games~\citep{Kearns_et_al_2001,Kearns_2007}, graphical polymatrix games, and
hypergraphical games~\citep{Papadimitriou:2008:CCE:1379759.1379762}. For approximate MSNE, we adopt the standard notion of $\epsilon$-MSNE (also known as $\epsilon$-approximate MSNE), an additive (as opposed to relative) approximation scheme widely used in algorithmic game theory~\citep{Lipton_et_al_2003,Daskalakis_et_al_2007,deligkas2014,barman15}.

In this paper, we provide FPTAS and quasi-PTAS for GMhGs in which the individual player's
number of actions $m$ and the hypertree-width $w$ of the underlying game hypergraph
are bounded. The key to our solution is the formulation of a CSP such that any solution to this CSP is an $\epsilon$-MSNE of the game. This raises two challenging questions: Will the CSP have any solution at all? In case it has a solution, how can we compute it \emph{efficiently}? Regarding the first question, we discretize \emph{both} the probability space and the payoff space of the game to guarantee that for \emph{any} MSNE of the game (which always exists), the nearest grid point is a solution to the CSP. For the second question, we give a DP algorithm that is an FPTAS when $m$ and $w$ are bounded by a constant. 
Most remarkably, this algorithm eliminates the exponential dependency on the largest neighborhood size of a node, which has plagued previous research on this problem.

\section{Related Work}

In this section, we provide a brief overview of the previous computational
complexity and algorithmic results for the problem of $\epsilon$-MSNE computation (additive approximation scheme as most commonly defined in game theory) in
general.
A full account of all specific sub-classes of GMhGs such as normal-form games and
(standard) graphical games is beyond the scope of this paper, just as is
the discussion on (a) other types of approximations such as the less common relative approximation; (b) other popular equilibrium-solution concepts such as pure-strategy Nash equilibria and correlated equilibria~\citep{aumann74,aumann87}; and (c) other quality guarantees of solutions, 
including exact MSNE and ``well-supported'' approximate MSNE. 

\begin{table}
\centering
\small
\begin{tabular}{ll}
\toprule
CSP & Constraint Satisfaction Problem \\ 
DP  & Dynamic Programming   \\  
FPTAS & Fully Polynomial Time Approx. Scheme \\ 
GMhG & Graphical Multi-hypermatrix Game \\ 
MSNE & Mixed-Strategy Nash Equilibrium \\ 
Quasi-PTAS & Quasi-Polynomial Time Approx. Scheme \\ 

\bottomrule
\end{tabular}
\caption{Acronyms used in this paper.}
\label{tab:acro}
\end{table}
The complexity status of normal-form games is well-understood
today, thanks to 
a series of seminal works~\citep{DaskalakisCACM,daskalakis:195} that culminated in the
PPAD-completeness of 2-player multi-action normal-form games, also known as
\emph{bimatrix games}~\citep{ChenSettlingJACM}.
Once the complexity of exact MSNE computation
was established, the spotlight naturally fell on approximate MSNE, 
 especially in succinctly representable games such as
graphical games. \citet{ChenSettlingJACM} showed that bimatrix games do not admit an
FPTAS unless PPAD $\subseteq$ P. This result opened up 
computing a PTAS. 

There has been a series of results based on 
constant-factor approximations. The current best PTAS is a 0.3393-approximation for bimatrix
games~\citep{tsaknakis2008}, which can be extended to the cases of
three and four-player games with the approximation guarantees of
0.6022 and 0.7153, respectively. Note that sub-exponential algorithms
for computing $\epsilon$-MSNE in games with a constant number of players
have been known prior to all of these
results~\citep{Lipton_et_al_2003}. As a result, it is unlikely that
the case of constant number of players will be
PPAD-complete.  
Along that line, \citet{rubinstein15} considered the hardness of computing
$\epsilon$-MSNE in $n$-player succinctly representable games such as
general graphical games and graphical polymatrix games. He showed that there exists a constant $\epsilon$ such that finding an $\epsilon$-MSNE in a $2$-action graphical polymatrix game with a bipartite structure and having a maximum degree of 3 is PPAD-complete. 
\citet{ChenSettlingJACM} showed the hardness of bimatrix games for a polynomially small $\epsilon$, and  \citet{rubinstein15} showed the hardness (in this case, PPAD-completeness) of $n$-player polymatrix games for a constant $\epsilon$. 

On a positive note, 
\citet{deligkas2014} presented an algorithm for computing a $(0.5+\delta)$-MSNE of an $n$-player polymatrix game. Their algorithm runs in time polynomial in the input size and $\frac{1}{\delta}$. 
Very recently, \citet{barman15} gave a quasi-polynomial time randomized algorithm for computing an $\epsilon$-MSNE in tree-structured polymatrix games. They assumed that the payoffs are normalized so that the local payoff of any player $i$ from any other player $j$ lies in $[0, 1/{d_i}]$, where $d_i$ is the degree of $i$. This guarantees, in a strong way, that the total payoff of any player is in $[0,1]$. In comparison, we do not make the assumption of local payoffs lying in $[0, 1/{d_i}]$. Also, our algorithm is a deterministic FPTAS when $m$
is bounded by a constant.

Closely related to our work, \citet{ortiz14} gave a framework for sparsely discretizing probability spaces in order to compute $\epsilon$-MSNE in tree-structured GMhGs. The time complexity of the resulting algorithm depends on $(\frac{k}{\epsilon})^k$ when $m$ is bounded by a constant. Ortiz's result is a significant step forward compared to \citet{Kearns_et_al_2001}'s algorithm in the foundational paper on graphical games. In the latter work, the time complexity depends on $(\frac{2^k}{\epsilon})^k$ when $m$ is bounded by a constant. Both of these algorithms are exponential in the representation size of succinctly representable games such as graphical polymatrix games. Compared to these works, our algorithm eliminates the exponential dependency on $k$. Furthermore, compared to Ortiz's work, we discretize \emph{both} probability and payoff spaces in order to achieve an FPTAS. This joint discretization technique is novel for this large class of games and has a great potential for other types of games.

\subsubsection*{Hardness of Relaxing Key Restrictions.}
We use two restrictions:
(1) Our focus is on GMhGs (e.g., graphical polymatrix games) with tree structure, and (2) our FPTAS for $\epsilon$-MSNE computation hinges on the assumption that the number of actions is bounded by a constant. 
We next discuss what happens if we relax either of these two restrictions.

\emph{Tree-structured polymatrix games with unrestricted number of actions:} A bimatrix game is basically a tree-structured polymatrix game with two players. \citet{ChenSettlingJACM} showed that there exists no FPTAS for bimatrix games with an unrestricted number of actions unless all problems in PPAD are polynomial-time solvable. In this paper, we bound the number of actions by a constant. We should also note the main motivation behind graphical games,
as originally introduced by~\citet{Kearns_et_al_2001}: compact/succinct
representations 
where the representation sizes do not depend exponentially in $n$, but are instead
exponential in $k$ and linear in $n$. As~\citet{Kearns_et_al_2001}
stated, if $k \ll n$, we obtain exponential gains in representation
size. Thus, it is $n$ and $k$ the parameters of main interest in standard graphical games; the
parameter $m$ is of secondary interest. Indeed, even
\citet{Kearns_et_al_2001} concentrate on the case of 
$m=2$.

\emph{Graphical (not necessarily tree-structured) polymatrix games with bounded number of actions:} 
\citet{rubinstein15} showed that for $\epsilon = 10^{-8}$ and $m = 10^4$, computing an $\epsilon$-MSNE for an $n$-player game is PPAD-hard. This hardness proof involves the construction of graphical (non-tree) polymatrix games. Therefore, the result carries over to $n$-player graphical polymatrix games. This lower bound result shows that graph structures that are more complex than trees are intractable (under standard assumptions) even for constant $m$ and small but constant $\epsilon$.

\section{Preliminaries, Background, and Notation}





Denote by $a \equiv (a_1,a_2,\ldots,a_n)$ an $n$-dimensional vector
and by $a_{-i} \equiv (a_1,\ldots,a_{i-1},a_{i+1},\ldots,a_n)$ the
same vector without the $i$-th component. Similarly, for every set $S \subset
[n] \equiv \{1,\ldots,n\}$, denote by $a_S \equiv (a_i : i \in S)$ the
(sub-)vector formed from $a$ using 
exactly the 
components of $S$. 
$S^c \equiv [n] - S$ denotes the complement of $S$, and $a \equiv
(a_S,a_{S^c}) \equiv (a_i,a_{-i})$ for every $i$. If $A_1,\ldots,A_n$
are sets, denote by $A \equiv \times_{i \in [n]} A_i$, $A_{-i} \equiv
\times_{j \in [n] - \{i\}} A_j$ and $A_S \equiv \times_{j \in S}
A_j$. 
To simplify the presentation, whenever we have a difference of a set $S$
with a singleton set $\{i\}$, we often abuse notation and
denote by $S - i \equiv S - \{i\}$.

\subsection{GMhG Representation}


\begin{definition}
A {\em graphical multi-hypermatrix game (GMhG)\/} is defined by a set
$V$ of $n$ players and the followings for each player $i \in V$:
\begin{itemize}

\item a set of {\em
  actions\/} or pure strategies $A_i$;
  
\item  a set $\mathcal{C}_i \subset
2^V$ of {\em local cliques\/} or \emph{local
  hyperedges} 
 such that if $C \in \mathcal{C}_i$ then $i \in C$, and two additional sets defined based on $\mathcal{C}_i$:
 
 \begin{itemize}
 \item $i$'s neighborhood $N_i \equiv \cup_{C \in \mathcal{C}_i} C$ (the set of players, including $i$, that affect $i$'s payoff) and
 \item $\Neigh_i \equiv \{ j \in V \mid i \in N_j, j \neq i\}$ (the set of players, not including $i$, affected by $i$);
 \end{itemize}
 
\item a set $\{M'_{i,C} : A_C \to \bbbr  \mid C \in \mathcal{C}_i \}$ of {\em local-clique payoff matrices\/}; and

\item the {\em local\/} and {\em global payoff matrices\/} $M'_i : A_{N_i} \to \bbbr$ and $M_i : A \to \bbbr$ of $i$ 
defined as $M'_i(a_{N_i}) \equiv \sum_{C \in \mathcal{C}_i} M'_{i,C}(a_{C})$ and $M_i(a) \equiv M'_i(a_{N_i})$, respectively.

\end{itemize}
  
\end{definition}

We denote by $\kappa_i \equiv |\mathcal{C}_i|$
  and $\kappa \equiv \max_i \kappa_i$ the number of hyperedges of player $i$
  and the maximum number of hyperedges over all players,
  respectively. Similarly, we denote $\kappa'_i \equiv \max_{C \in
    \mathcal{C}_i} |C|$
  and $\kappa' \equiv \max_i \kappa'_i$ the size of the biggest hyperedge of player $i$
  and the size of the biggest hyperedge over all players,
  respectively. Also, for consistency with the graphical games literature,
  we denote by $k_i \equiv |N_i|$ and $k \equiv \max_i
  k_i$ the size of the neighborhood of the primal graph induced by the
  local hyperedges of $i$ and the maximum neighborhood
  size over all players, respectively. 

\begin{figure}
\centering
  \includegraphics[width=0.3\textwidth]{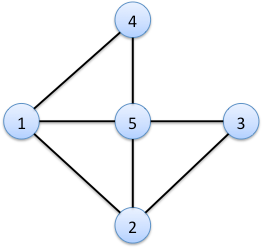}
  \caption{\textbf{The primal graph of an example GMhG.} The endpoints of each edge belong to some common hyperedge. The sets of local hyperedges of players 1 to 5 are: $\mathcal{C}_1 = \{ \{1, 2\}\}$, $\mathcal{C}_2 = \{ \{2, 5\}\}$, $\mathcal{C}_3 = \{ \{3, 5\}\}$, $\mathcal{C}_4 = \{ \{1, 4\}, \{4, 5\}\}$, and $\mathcal{C}_5 = \{ \{1, 5\}, \{5, 2, 3\}\}$. 
  The neighborhood of 5 is $N_5 = \{1, 2, 3, 5\}$ and the set of players affected by 5 is $\Neigh_5 = \{2, 3, 4\}$. The local payoff of 5 is the sum of payoffs from the hyperedges $\{1, 5\}$ and $\{5, 2, 3\}$. For player 5, $\kappa_5 = 2$ and $\kappa'_5 = 3$. For the whole graph, $\kappa = 2$ and $\kappa' = 3$.
  }
  \label{fig:ex1}
\end{figure}%

Fig.~\ref{fig:ex1} illustrates some of the above terminology. The GMhG shown there (without the actual payoff matrices) is not a graphical game, because in a graphical game each $\mathcal{C}_i$ must be singleton (i.e., only one local hyperedge for each node $i$, which corresponds to $N_i$). This GMhG is not a polymatrix game either, because not all local hyperedges consist of only 2 nodes. Furthermore, the GMhG is not a hypergraphical game~\cite{Papadimitriou:2008:CCE:1379759.1379762}, because the local hyperedges are not symmetric (player 1's local hyperedge has 2 in it, but 2's local hyperedge does not have 1).

The
representation sizes of GMhGs, polymatrix games, and graphical games are $O(n \, \kappa
  \, m^{\kappa'})$, $O(n \, k \,
  m^2 )$, and  $O(n \,
  m^k )$, respectively.
%
%
%

\paragraph{Normalizing the Payoff Scale.} 
The dominant mode of approximation in game theory is additive approximation~\cite{Lipton_et_al_2003,Daskalakis_et_al_2007,deligkas2014,barman15}. 
For $\epsilon$ to be truly meaningful as a \emph{global} additive approximation
parameter, the payoffs of \emph{all} players must be brought to the
\emph{same} scale. The convention in the literature 
(see, e.g., \citetp{deligkas2014}) is to assume that (1) each player's local payoffs 
are spread between 0 and 1, with the local payoff being exactly 0 for
some joint action and exactly 1 for another; and (2) the local-\emph{clique}
payoffs (i.e., entries in the payoff matrices) are between 0 and 1. 
Here, we relax the second assumption;
that is, we can handle matrix entries that are
negative or larger than 1. 
Indeed, because of the \emph{additive} nature of the local payoffs
in GMhGs, the ``$[0,1]$
assumption'' on those payoffs may require that some of the
local-clique payoffs contain values $<0$ or $>1$. This is a
\emph{key} aspect of payoff scaling, and in turn the approximation problem, that
often does not get proper attention.
We have a much milder assumption that the \emph{maximum spread} of 
local-clique payoffs (or matrix entries) of each player is bounded by a constant. 
We allow this constant to be different for different players.

Note that the equilibrium conditions are invariant to affine
transformations. In the case of graphical games with local payoff
matrices represented in tabular/matrix/normal-form, it is convention
to assume 
that the maximum and minimum local
payoff values of each player are $0$ and $1$, respectively. 
This assumption is without loss of generality, because for any general
graphical game, 
we can find the minimum and maximum local payoff of each player \emph{efficiently} 
and thereby make these $0$ and $1$, respectively through affine transformations.

While doing this 
for GMhGs in general is intractable in the worst
case, it is computationally efficient for GMhGs whose local
hypergraphs have bounded hypertree-widths. For instance, 
the payoffs of a graphical polymatrix game can be normalized in polynomial time 
to achieve the first assumption above. 
To do that, we define the following terms. 
\begin{align*}
u_i &= \max_{a_i} \sum_{j \in
  \mathcal{N}_i} \max_{a_j} M'_{i,j}(a_i,a_j)\\
l_i &= \min_{a_i} \sum_{j \in
  \mathcal{N}_i} \min_{a_j} M'_{i,j}(a_i,a_j)    
\end{align*}
It is evident from the
last expression that we can efficiently compute each of those values for each
$i$ via
dynamic programming (DP) in time $|A_i| (|\mathcal{N}_i| + |A_i|) = O(mk +
m^2)$, and compute all the values for all $i$ in time $2 \, \sum_i
|A_i| (|\mathcal{N}_i| + |A_i|) = O(m |E| + m^2)$.

Despite such exceptions, in general, we do not have much of a choice but to
assume that the payoffs of all players are in the same scale, so that
using a global 
$\epsilon$ is meaningful.
For any local-clique payoff hypermatrix $M'_{i,C}$, we define the following notation on the maximum payoff, minimum payoff, and the largest spread of payoffs in that hypermatrix, respectively.
\begin{align*}
& u_{i,C} \equiv \max_{a_C \in A_C} M'_{i,C}(a_C)\\
& l_{i,C} \equiv \min_{a_C \in A_C} M'_{i,C}(a_C)\\
& R_{i,C} \equiv u_{i,C} - l_{i,C}
\end{align*}

\paragraph{Example.} The following example shows that restricting the values of
the local-clique hypermatrices to $[0,d_i]$
while keeping the maximum and minimum values of the local payoff
functions of each player $i$ to be $1$ and $0$, respectively, loses
generality (e.g., some local-clique payoffs may be negative).The reason is for some games there is no affine transformation that would satisfy both of these
conditions while maintaining exactly the same equilibrium conditions.
Let $b,c > 0$, and $0 < \gamma < \frac{1}{3}$.
\begin{align*}
M'_{1,2} =& \left[ 
\begin{array}{cc}
1+2b & 1+2b-\gamma\\
-2c+\gamma & -2c
\end{array}
\right]\\
M'_{1,3} =& M'_{1,4} = \left[ 
\begin{array}{cc}
-b & -b-\gamma\\
c+\gamma & c
\end{array}
\right]
\end{align*}
\begin{align*}
u_{1,2} =& 1+2b, u_{1,3} = u_{1,4} = c+\gamma\\
l_{1,2} =& -2c, l_{1,3} = l_{1,4} = -b\\
R_{1,2} =& 1+2b+2c, R_{1,3} = R_{1,4} = b+c+\gamma
\end{align*}
\begin{tabular}{cccc|c}
\end{tabular}
\begin{align*}
M'_1(a_1,a_2,a_3,a_4) = 
\begin{cases}
1, & \text{if $a_1=a_2=a_3=a_4=1$,}\\
1-\gamma, & \text{if $a_1=1$ and exactly one of $a_2$, $a_3$, or $a_4$
  $=2$,}\\
1-2\gamma, & \text{if $a_1=1$ and exactly two of $a_2$, $a_3$, or $a_4$
  $=2$,}\\
1-3\gamma, & \text{if $a_1=1$ and $a_2=a_3=a_4=2$,}\\
3\gamma, & \text{if $a_1=2$ and $a_2=a_3=a_4=1$,}\\
2\gamma, & \text{if $a_1=2$ and exactly two of $a_2$, $a_3$, or $a_4$
  $=1$,}\\
\gamma, & \text{if $a_1=2$ and exactly one of $a_2$, $a_3$, or $a_4$
  $=1$,}\\
0, & \text{if $a_1=a_2=a_3=a_4=2$,}
\end{cases}
\end{align*}

\section{Discretization Scheme: Simple Version}

In contrast with earlier discretization schemes~\cite{Kearns_et_al_2001}, we allow different discretization sizes for different players. Also, in contrast with recent schemes~\cite{ortiz14}, we discretize \emph{both} the probability space (Definition~\ref{def:prob})  and the payoff space (Definition~\ref{def:payoff}).

\begin{definition}
\label{def:prob}
{\em (Individually-uniform mixed-strategy discretization
  scheme)\/} Let $I = [0,1]$ be the uncountable set of the possible values
of the probability $p_i(a_i)$ of each action $a_i$ of each
player $i$. Discretize $I$ by a finite grid defined by the set
$\Itilde_i = \{0, \tau_i, 2 \tau_i, \ldots, (s_i-1) \tau_i, 1\}$
with interval
$\tau_i = 1/s_i$ for some
integer $s_i > 0$. Thus the {\em mixed-strategy-discretization size\/} is
$|\Itilde_i| = s_i + 1$. We only consider mixed
strategies $q_i$ such that $q_i(a_i) \in \Itilde_i$ for all
$a_i$, and $\sum_{a_i} q_i(a_i) = 1$. The {\em induced
  mixed-strategy discretized space\/} of joint mixed strategies is
$\Itilde \equiv \times_{i \in V} \Itilde_i^{\, |A_i|}$, subject to the
individual normalization constraints. 
\end{definition}


\begin{definition}
\label{def:payoff}
{\em (Individually-uniform expected-payoff discretization
  scheme)\/} Let $C'_i \equiv C'_i(\C'_i) \equiv \cup_{C \in \C'_i}
  C$. Define the following two terms.
\begin{align*}
\bar{l}_i \equiv & \min_{\C'_i \subset \C_i, \C'_i \neq \emptyset} \min_{a_{C'_i}} \sum_{C \in
  \C'_i} M'_{i,C}(a_C) \geq \min_{\C'_i \subset \C_i, \C'_i \neq \emptyset} \sum_{C \in
  \C'_i} l_{i,C} \geq \min_{\C'_i \subset \C_i, \C'_i \neq \emptyset} |\C'_i| l_i \geq
                   \min\left( \kappa_i l_i , l_i \right) \; \\
&\text{(The last inequality above considers the cases of negative and non-negative $l_i$.)}\\
\bar{u}_i \equiv & \max_{\C'_i \subset \C_i, \C'_i \neq \emptyset} \max_{a_{C'_i}} \sum_{C
                   \in \C'_i} M'_{i,C}(a_C) \leq \max_{\C'_i \subset \C_i, \C'_i \neq \emptyset} \sum_{C \in
  \C'_i} u_{i,C} \leq \max_{\C'_i \subset \C_i, \C'_i \neq \emptyset}
                   |\C'_i| u_i \leq \kappa_i u_i \; 
\end{align*}
Let $I_i= \left[\bar{l}_i ,\bar{u}_i \right]$ denote an
interval containing every possible expected
payoff values that each player $i$ can receive from each local-clique
payoff matrix $M'_{i,C}(p_C)$, where 
$p_C \in \Itilde_C$ (i.e., $p_C$ is in the grid). 
Discretize $I_i$ by a finite grid defined by the set $\Itilde'_i =
\{\bar{l}_i, \bar{l}_i + \tau'_i, \bar{l}_i +  2 \tau'_i, \ldots,
\bar{l}_i+  (s'_i-1) \tau'_i, \bar{u}_i \}$ 
with interval $\tau'_i = \bar{R}_i/s'_i$ for some integer
$s'_i > 0$, where $\bar{R}_i \equiv \bar{u}_i - \bar{l}_i \leq
\max(\kappa_i R_i, \kappa_i u_i - l_i)$.
Thus the {\em expected-payoff-discretization size\/} is
$|\Itilde'_i| = s'_i + 1$. Then, for any $B \subset C \in
\mathcal{C}_i$, we would only consider
an expected-payoff $\Mtilde'_{i,C}(a_B,q_{C-B})$ in the discretized grid that is 
closest to the exact local-clique expected payoff
$M'_{i,C}(a_B,q_{C-B})$. 
More formally, $\Mtilde'_{i,C}(a_B,q_{C-B}) = \argmin_{r \in
  \Itilde'_i} |r
-  M'_{i,C}(a_B,q_{C-B})| \equiv \Proj{M'_{i,C}(a_B,q_{C-B})}$.
The {\em induced expected-payoff discretized space\/} over all 
local-cliques of all players is $\Itilde' \equiv \times_{i
  \in V}
\left( \Itilde'_i \right)^{\kappa_i}$. 
\end{definition}

\citet{hau_aaai15} use a similar idea in the setting
  of \emph{interdependent defense (IDD) games}, where each of
  $n$ sites
  has a binary pure-strategy set, and a specific instance of the
  general setting in which the attacker has $n+1$ pure
  strategies. 
  The reason why the attacker has
  $n+1$ pure strategies is because, in the particular instance of IDD games
  that~\citet{hau_aaai15} consider, the attacker can attack \emph{at most} one site at a
  time, simultaneously. 
In contrast, the potential multiplicity of actions of \emph{all}
players poses one of the main challenges in our case, particularly
because of the non-tabular/non-normal-form representation of the
general GMhGs, which is exponential in the size of the largest hyper-edge
over all players neighborhood hyper-graphs.

\section{A GMhG-Induced CSP: Simple Version}
\label{sec:CSP}

Consider the following CSP induced by a GMhG: 
\begin{itemize}
\item \emph{Variables:} for all $i$ and $a_i$, a variable $p_{i,a_i}$
  corresponding to the mixed-strategy/probability that player $i$
  plays pure strategy $a_i$ and, for all $C \in \C_i$, a
  variable $S_{i,C,a_i}$ corresponding to some \emph{partial sum} of the
  expected payoff of player $i$ based on an ordering of the
  local hyperedge elements
  of $\C_i$. Formally, if $\mathcal{P}_i \equiv
  \bigcup_{a_i} \{ p_{i,a_i} \}$ and $\mathcal{S}_{i,C}
  \equiv \bigcup_{a_i} \{ S_{i,C,a_i} \}$, then the set of all variables is
  $\bigcup_{i} \left( \mathcal{P}_i \bigcup_{C \in \mathcal{C}_i}  \mathcal{S}_{i,C} \right)$.
\item \emph{Domains:} the domain of each variable $p_i(a_i)$ is
  $\widetilde{I}_i$, while that of each partial-sum variable $S_{i,C,a_i}$ is $\Itilde'_i$.
\item \emph{Constraints:} for each $i$: 
\begin{enumerate} 
\item \emph{Best-response and partial-sum expected local-clique payoff:}
  We first compute a hyper-tree decomposition of the local hypergraph induced
  by
  hyperedges $\mathcal{C}_i$. We then \emph{order} the set of local-cliques $\mathcal{C}_i$
of each player $i$ such that $\mathcal{C}_i \equiv
\{C_i^1,C_i^2,\ldots,C_i^{\kappa_i}\}$.  
The superscript
denotes the corresponding order of the local-cliques of player $i$. We make sure that 
 the order is consistent with the hypertree decomposition of the
local hypergraph, in the standard (non-serial) DP-sense
used in constraint and probabilistic graphical models~\citep{dechter_book,Koller+Friedman:09}. For any $a_i$:
\begin{enumerate}
\item 
\(
\sum_{a'_i} p_{i,a'_i} S_{i,C_i^{\kappa_i},a'_i} \geq
S_{i,C_i^{\kappa_i},a_i} - \frac{2}{3} \epsilon;
\)
\item 
  $S_{i,C_i^1,a_i} =
  \Mtilde_{i,C_i^1}(a_i,p_{C_i^{1}-i})$, and
  for $l = 2,\ldots,\kappa_i$,
\[
S_{i,C_i^l,a_i} =
  \Mtilde_{i,C_i^l}(a_i,p_{C_i^l - i}) + S_{i,C_i^{l-1},a_i}.
\]
\end{enumerate}
We call (a) the best-response constraint and (b) the partial-sum expected local-clique payoff constraint.
\item \emph{Normalization:} $\sum_{a_i} p_{i,a_i} = 1$.
\end{enumerate}
\end{itemize}

The number of variables of the CSP is $O\left( n \, m \, \kappa
\right)$. The size of each domain $\widetilde{I}_i$ is $O\left( s
\right)$, where $s \equiv \max_i s_i$. The size of each domain
$\widetilde{I}_i'$ is 
$O\left( s' \right)$, 
where $s' \equiv \max_i s_i'$. 
The computation of each $\Mtilde_{i,C_i^l}(a_i,p_{C_i^l - i})$ in 1(b) above, which takes time
$O( s^{\kappa'-1})$, 
dominates the running time to build the constraint set. 
The total number of constraints is 
$O\left( n \, m  \, \kappa \right)$.
The maximum number of variables in any constraint
is 
$O(m \, \kappa')$. 
Given a hyper-tree decomposition, the amount of time to build the
constraint set using a tabular representation is $O(n \, m \, \kappa \,
s^{m \, \kappa'} (s')^m )$, which is the \emph{representation size of the
  GMhG-induced CSP}. 

\subsection{Correctness of the GMhG-Induced CSP}


We use the following Lemma of \citet{ortiz14}. Note that our results do not follow directly from this Lemma, since we also discretize the payoff space. Furthermore, for tree-structured polymatrix games, \citet{ortiz14}'s running time depends on $(\frac{k}{\epsilon})^k$ when $m$ is bounded by a constant, whereas ours is polynomial in the maximum neighborhood size $k$.

\begin{lemma}{\bf (Sparse MSNE Representation Theorem)}
\label{lem:sd}
For any GMhG and any $\epsilon$ such that \[0  < \epsilon \leq 2 \,
\min_{i \in V} \, \frac{\sum_{C \in \mathcal{C}_i} R_{i,C} \, (|C| -
  1)}{\max_{C' \in \mathcal{C}_i} |C'| - 1} \, ,\] a (uniform)
discretization with 
\begin{align*} 
s_i = \left\lceil \frac{2 \, |A_i| \, \max_{j
      \in \Neigh_i} \sum_{C \in \mathcal{C}_j} R_{j,C} \, (|C| - 1) }{
    \epsilon }\right\rceil 
\end{align*}
for each player $i$ is sufficient to guarantee that for {\em every\/} MSNE of the game, its closest (in $\ell_\infty$ distance) joint mixed strategy in the induced discretized space 
is also an $\epsilon$-MSNE.
\end{lemma}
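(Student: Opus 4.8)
The plan is to start from an arbitrary exact MSNE $p^*$ (which exists by Nash's theorem), round it to the $\ell_\infty$-closest joint mixed strategy $q$ in the induced discretized space, and show that this rounding inflates every player's regret by at most $\epsilon$. First I would record the elementary rounding bound. Because the normalization constraints $\sum_{a_i} p_i(a_i)=1$ decouple across players, the $\ell_\infty$-closest joint grid strategy is obtained by projecting each $p^*_i$ independently onto the grid simplex $\widetilde{I}_i$. Rounding each coordinate of $p^*_i$ down to its grid value and redistributing the resulting deficit (a multiple of $\tau_i$ that is strictly below $|A_i|\tau_i$) produces a valid grid distribution within $\tau_i$ of $p^*_i$ coordinatewise; hence the projection satisfies $\lVert q_i - p^*_i \rVert_\infty \le \tau_i = 1/s_i$ and therefore $\lVert q_i - p^*_i \rVert_1 \le |A_i|\,\tau_i$ for every player $i$.

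The crux is a multilinear perturbation bound. Fix a player $j$ and a pure action $a_j$, and write the expected pure-strategy payoff as $M'_j(a_j,q_{-j}) = \sum_{C \in \mathcal{C}_j} M'_{j,C}(a_j,q_{C-j})$, which is multilinear in the neighbors' mixed strategies. To bound the effect of replacing $p^*$ by $q$, I would use a hybrid (telescoping) argument that swaps the distribution of one neighbor $i \in C-j$ at a time. The essential step is a zero-sum cancellation: since $\sum_{a_i}(q_i(a_i)-p^*_i(a_i)) = 0$, the local-clique matrix in each swap term may be shifted by an arbitrary constant along the $a_i$-axis, so that swap contributes at most $\tfrac12 R_{j,C}\,\lVert q_i - p^*_i \rVert_1$ (the remaining factors form a product distribution and sum to $1$). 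Summing over neighbors and cliques yields, for every $a_j$,
\[
\bigl| M'_j(a_j,q_{-j}) - M'_j(a_j,p^*_{-j}) \bigr| \;\le\; \Delta_j \;\equiv\; \sum_{C \in \mathcal{C}_j} \sum_{i \in C-j} \tfrac12 R_{j,C}\, \lVert q_i - p^*_i \rVert_1 .
\]
It is precisely this cancellation that makes the bound scale with the \emph{spread} $R_{j,C}$ rather than the larger maximum payoff magnitude, and hence makes the final $s_i$ depend only on the $R_{j,C}$.

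Finally I would convert this into a regret bound and solve for $s_i$. Let $V_j = M'_j(p^*)$; the MSNE property gives $M'_j(a_j,p^*_{-j}) \le V_j$ for all $a_j$, with equality on the support of $p^*_j$. The best-response value against $q_{-j}$ is then at most $V_j + \Delta_j$, while the value player $j$ actually obtains, $M'_j(q_j,q_{-j}) = \sum_{a_j} q_j(a_j)\,M'_j(a_j,q_{-j})$, is at least $V_j - \Delta_j$ up to a further term from $j$'s own re-mixing that a second zero-sum estimate bounds by $\tfrac12\bigl(\sum_{C}R_{j,C}\bigr)\lVert q_j-p^*_j\rVert_1$. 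Thus the regret of $j$ under $q$ is at most a small constant multiple of $\Delta_j$ plus this own term. The last step is an attribution argument: each player $i$'s rounding error $\tau_i$ enters the regret of exactly those players $j$ with $i \in N_j$, i.e.\ $j \in \Neigh_i$. Substituting $\lVert q_i-p^*_i\rVert_1 \le |A_i|\tau_i$, demanding $\max_j \mathrm{regret}_j(q) \le \epsilon$, and solving the resulting inequality for $\tau_i = 1/s_i$ produces the stated value of $s_i$; here the factor $(|C|-1)$ comes from $\sum_{i \in C-j} 1 = |C|-1$, the outer $\max_{j \in \Neigh_i}$ from the attribution, and the leading constant $2$ from the two-sided $V_j \pm \Delta_j$ accounting.

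I expect the multilinear perturbation bound to be the main obstacle: carrying out the hybrid swap while applying the zero-sum cancellation tightly enough to retain $R_{j,C}$ (rather than $\max|M'_{j,C}|$) is what drives the entire estimate and fixes the shape of $s_i$. The secondary difficulties are bookkeeping the own-mixing term and the simplex-preserving rounding so that they fold into the same $\Neigh_i$-indexed expression, and checking that the hypothesized upper bound on $\epsilon$ is exactly what guarantees the induced grid is admissible (e.g., that $\tau_i \le 1$, so that $s_i \ge 1$ gives a genuine discretization).
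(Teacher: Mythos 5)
First, a point of reference: the paper does not prove Lemma~\ref{lem:sd} at all --- it is imported verbatim from Ortiz (2014) --- so there is no in-paper proof to compare against. Your outline is, in substance, the standard argument behind that result: round the exact MSNE $p^*$ to the grid, telescope over the neighbors in each clique one at a time, use the zero-sum cancellation $\sum_{a_i}(q_i(a_i)-p^*_i(a_i))=0$ to replace maximum payoff magnitudes by the spreads $R_{j,C}$, and attribute player $i$'s rounding error to the regrets of the players $j\in\Neigh_i$. The factor $(|C|-1)=|C-j|$ and the outer $\max_{j\in\Neigh_i}$ come out exactly as you describe.

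The one step that does not close as written is your treatment of the ``own re-mixing'' term. You bound it by $\tfrac12\bigl(\sum_{C\in\C_j}R_{j,C}\bigr)\lVert q_j-p^*_j\rVert_1$ and assert it folds into the same $\Neigh_i$-indexed expression. It does not: $\tau_j$ in the lemma is calibrated to $\max_{j'\in\Neigh_j}\sum_{C\in\C_{j'}}R_{j',C}(|C|-1)$, i.e., to the ranges of the players that $j$ \emph{affects}, which can be arbitrarily small relative to $j$'s own $\sum_{C\in\C_j}R_{j,C}$; so this term can exceed the $\epsilon/2$ of slack that your two-sided $V_j\pm\Delta_j$ accounting leaves, and the stated $s_i$ would then fail. (A telltale sign is that the lemma's $(|C|-1)$ counts only the neighbors in each clique, leaving no room for a contribution from $j$'s own rounding.) The repair is to show the term is actually zero. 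Any coordinate with $p^*_j(a_j)=0$ already lies on the grid; the round-down-then-redistribute construction you describe yields a grid point at $\ell_\infty$-distance strictly less than $\tau_j$ from $p^*_j$ whose support is contained in that of $p^*_j$, whereas any grid point placing mass on a zero coordinate of $p^*_j$ is at distance at least $\tau_j$. Hence the $\ell_\infty$-closest grid point satisfies $\mathrm{supp}(q_j)\subseteq\mathrm{supp}(p^*_j)$; on that support $M'_j(a_j,p^*_{-j})=V_j$ exactly by the equilibrium support condition, so $M'_j(q_j,p^*_{-j})=V_j$ and the regret is bounded by $2\Delta_j$ alone. With that observation inserted, your outline goes through.
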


We next present our sparse-representation theorem, 
where we discretize the partial sums of expected local-clique payoffs.
\begin{theorem}{ {\bf (Sparse Joint MSNE and Expected-Payoff
      Representation Theorem)}}
\label{thm:joint_sd}
Consider any GMhG and any $\epsilon$, \[0  < \epsilon \leq 2 \,
\min_{i \in V} \, \frac{\sum_{C \in \mathcal{C}_i} R_{i,C} \, (|C| -
  1)}{\max_{C' \in \mathcal{C}_i} |C'| - 1} \, .\] 
Setting, for all
players $i$, the pair $(\tau_i,\tau_i')$ defining the joint
(individually-uniform) mixed-strategy and expected-payoff
discretization of player $i$ such that
\[
\tau_i = \frac{\epsilon}{6 \, |A_i| \, \max_{j \in \Neigh_i} \sum_{C \in \mathcal{C}_j} R_{j,C} \, (|C| - 1) }
\]
and
\[
\tau'_i = \frac{\epsilon}{3 \, \kappa_i} \; ,
\]
so that the discretization sizes 
\begin{align*}
s_i = \left\lceil \frac{6 \, |A_i| \, \max_{j \in \Neigh_i}
    \sum_{C \in \mathcal{C}_j} R_{j,C} \, (|C| - 1) }{\epsilon}
\right\rceil 
\end{align*}
and
\[
s'_i = \left\lceil \frac{3 \, \bar{R}_i \, \kappa_i}{\epsilon} \right\rceil 
\]
for each mixed-strategy probability
and expected payoff value, respectively,
is sufficient to guarantee that for {\em every\/} MSNE of the game, its closest (in $\ell_\infty$ distance) joint mixed strategy in the induced discretized space 
is a solution of the GMhG-induced CSP, and 
that any solution to the GMhG-induced CSP (in discretized probability and payoff space) is an 
$\epsilon$-MSNE of the game.
\end{theorem}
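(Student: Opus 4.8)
The plan is to prove the two asserted directions separately, with the common engine being a bound on the error introduced by the expected-payoff discretization. Throughout, write $\Mtilde'_i(a_i,p_{-i}) \equiv S_{i,C_i^{\kappa_i},a_i} = \sum_{l=1}^{\kappa_i}\Mtilde'_{i,C_i^l}(a_i,p_{C_i^l-i})$ for the fully accumulated discretized local payoff that the last partial-sum variable encodes. The first step is a rounding estimate: since each $\Mtilde'_{i,C}(a_i,p_{C-i})$ is the projection of $M'_{i,C}(a_i,p_{C-i})$ onto a grid of spacing $\tau'_i$, every clique contributes error at most $\tau'_i/2$, and summing over the $\kappa_i$ cliques of player $i$ gives $|\Mtilde'_i(a_i,p_{-i}) - M'_i(a_i,p_{-i})| \le \kappa_i \tau'_i/2 = \epsilon/6$ by the choice $\tau'_i = \epsilon/(3\kappa_i)$. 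Averaging this pointwise bound against any own mixed strategy $p_i$ (whose weights sum to one) yields the same $\epsilon/6$ bound for the expected quantity $\sum_{a_i} p_i(a_i)\Mtilde'_i(a_i,p_{-i})$.

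For the completeness direction, fix any MSNE $p^*$ and let $\hat p$ be its closest point (in $\ell_\infty$) in the normalized probability grid $\Itilde$. Normalization and the partial-sum constraints hold automatically, since $\hat p$ lies in the normalized grid and the partial-sum variables are defined by the very recursion in constraint 1(b); the only real work is the best-response constraint. Here I would invoke Lemma~\ref{lem:sd}: the discretization size $s_i$ is exactly the one that lemma prescribes with $\epsilon$ replaced by $\epsilon/3$ (and the hypothesis on $\epsilon$ still holds at $\epsilon/3$), so $\hat p$ is an $(\epsilon/3)$-MSNE in the true payoffs, i.e. $\sum_{a'_i}\hat p_i(a'_i) M'_i(a'_i,\hat p_{-i}) \ge M'_i(a_i,\hat p_{-i}) - \epsilon/3$ for all $i,a_i$. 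Converting both sides to discretized payoffs via the rounding estimate, at cost $\epsilon/6$ per side, turns this into $\sum_{a'_i}\hat p_i(a'_i)\Mtilde'_i(a'_i,\hat p_{-i}) \ge \Mtilde'_i(a_i,\hat p_{-i}) - (\epsilon/6 + \epsilon/3 + \epsilon/6) = \Mtilde'_i(a_i,\hat p_{-i}) - \tfrac{2}{3}\epsilon$, which is precisely constraint 1(a).

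For soundness I would run the same chain backwards. Let $(p,S)$ be any CSP solution; constraint 1(a) reads $\sum_{a'_i} p_i(a'_i)\Mtilde'_i(a'_i,p_{-i}) \ge \Mtilde'_i(a_i,p_{-i}) - \tfrac{2}{3}\epsilon$. Replacing discretized payoffs by true payoffs, again at $\epsilon/6$ per side, gives $\sum_{a'_i} p_i(a'_i) M'_i(a'_i,p_{-i}) \ge M'_i(a_i,p_{-i}) - (\epsilon/6 + \tfrac{2}{3}\epsilon + \epsilon/6) = M'_i(a_i,p_{-i}) - \epsilon$, so $p$ is an $\epsilon$-MSNE. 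Note the budget splits as $\tfrac16+\tfrac46+\tfrac16=1$ here versus $\tfrac16+\tfrac26+\tfrac16=\tfrac23$ in the completeness direction; this asymmetry is exactly why the CSP slack is fixed at $\tfrac23\epsilon$ rather than $\epsilon$.

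The main obstacle is making the two interacting discretizations' bookkeeping airtight rather than any single hard estimate. Two points need care. First, one must confirm that each partial-sum variable $S_{i,C_i^l,a_i}$ is legitimately valued in $\Itilde'_i$; the interval $[\bar l_i,\bar u_i]$ was defined precisely so as to contain every partial sum over nonempty subsets $\C'_i \subset \C_i$, which is why $\bar l_i,\bar u_i$ are taken as extrema over all such subsets. Second, the error sources must be kept strictly separate: the probability-rounding error is absorbed entirely by Lemma~\ref{lem:sd} (it is what makes $\hat p$ an $(\epsilon/3)$-MSNE), whereas the payoff-rounding error is the additional $\epsilon/6$ on each side, and the two must not be double-counted. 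Once the $\kappa_i\tau'_i/2 = \epsilon/6$ estimate and the invocation of Lemma~\ref{lem:sd} at parameter $\epsilon/3$ are in place, both directions collapse to the one-line inequality chains above.
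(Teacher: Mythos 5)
Your proposal is correct and follows essentially the same route as the paper's proof: construct the assignment from the closest grid point, invoke Lemma~\ref{lem:sd} at parameter $\epsilon/3$ (which the choice of $s_i$ encodes), bound the payoff-rounding error by $\kappa_i\tau'_i/2=\epsilon/6$ per side to get the $\tfrac{2}{3}\epsilon$ slack in constraint 1(a), and reverse the chain for soundness with budget $\epsilon/6+\tfrac{2}{3}\epsilon+\epsilon/6=\epsilon$. The error bookkeeping and the points you flag for care match the paper's argument exactly.
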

\begin{proof}
For the first part of the theorem, 
let $p'$ be an MSNE of the GMhG. Let $p$ be the mixed strategy closest, in
$\ell_\infty$, 
to $p'$ in the grid induced by the combination of the discretizations
that each $\tau_i$ generates. For all $i$ and $a_i$,
set $p^*_{i,a_i} = p_i(a_i)$; and for all $i$ and $a_i$, first set $S^*_{i,C_i^1,a_i} =
\Mtilde_{i,C_i^1}(a_i,p^*_{C_i^1-i})$, and then recursively for $l=2,\ldots,\kappa_i$,
set $S^*_{i,C_i^l,a_i} =
\Mtilde_{i,C_i^l}(a_i,p^*_{C_i^l-i}) +
S^*_{i,C_i^{l-1},a_i}$. The resulting assignment satisfies the normalization
constraint of the CSP, by the definition of a mixed strategy. The
assignment also satisfies the partial-sum expected local-clique
payoffs by construction. Thus, we are left to prove that the best-response constraint is satisfied.
By the setting of $\tau_i$ and Lemma~\ref{lem:sd}, we have that
$p$ is an $(\epsilon/3)$-MSNE, and thus also an
$\epsilon$-MSNE. In addition, for all $i$ and $a_i$, we have the
following sequence of inequalities:
\begin{align*}
\sum_{a'_i} p_i(a'_i) \sum_{C \in \mathcal{C}_i}
M'_{i,C}(a'_i,p_{C-i}) \geq 
\sum_{C \in \mathcal{C}_i}
M'_{i,C}(a_i,p_{C-i}) - \frac{\epsilon}{3}.
\end{align*}
\begin{align}
\label{ineq:part1}
\sum_{a'_i} p^*_{i,a'_i} \sum_{l=1}^{\kappa_i} 
M'_{i,C_i^l}(a'_i,p^*_{C_i^l-i}) \geq 
\sum_{l=1}^{\kappa_i} 
M'_{i,C_i^l}(a_i,p^*_{C_i^l-i}) - \frac{\epsilon}{3}.
\end{align}
By the definition of $\Mtilde_{i,C}$, for all $i$ and $C \in
\mathcal{C}_i$, we have that for all $a_i$ and $l=1,\ldots,\kappa_i$,
\begin{align*}
\Mtilde_{i,C_i^l}(a'_i,p^*_{C_i^l-i}) - \frac{\tau'_i}{2} \leq 
M'_{i,C_i^l}(a'_i,p^*_{C_i^l-i})
\leq
\Mtilde_{i,C_i^l}(a'_i,p^*_{C_i^l-i}) + \frac{\tau'_i}{2} \; .
\end{align*}
Applying the last inequality to (\ref{ineq:part1})
and by unraveling the construction of the CSP
assignment, we have
\begin{align*}
\sum_{a'_i} p^*_{i,a'_i} \sum_{l=1}^{\kappa_i} 
\left( \Mtilde_{i,C_i^l}(a'_i,p^*_{C_i^l-i}) + \frac{\tau'_i}{2} \right)
\geq
\sum_{l=1}^{\kappa_i} \left(
\Mtilde_{i,C_i^l}(a_i,p^*_{C_i^l-i}) - \frac{\tau'_i}{2} \right) - \frac{\epsilon}{3}
\end{align*}
and
\[
\sum_{a'_i} p^*_{i,a'_i} S^*_{i,C_i^{\kappa_i},a'_i} \geq 
S^*_{i,C_i^{\kappa_i},a_i} - \kappa_i \, \tau'_i - \frac{\epsilon}{3} \; .
\]
Rearranging the terms, and plugging in 
$\kappa_i \, \tau'_i = \frac{1}{3}\epsilon$ we get 

\[
\textstyle
\sum_{a'_i} p^*_{i,a'_i} S^*_{i,C_i^{\kappa_i},a'_i} \geq 
S^*_{i,C_i^{\kappa_i},a_i} - \frac{2}{3}\epsilon  \; .
\]

Hence, the assignment $(p^*,S^*)$ also satisfies the best-response constraints (1(a) of CSP) and is a
solution to the GMhG-induced CSP.

Now, for the second part of the theorem, suppose $(p^*,S^*)$ is a solution of the GMhG-induced CSP. Then, by the combination of the best-response and partial-sum
expected local-clique payoff constraints, we
have that, for all $i$ and $a_i$,
\[
\sum_{a'_i} p^*_{i,a'_i} S^*_{i,C_i^{\kappa_i},a'_i} \geq
S^*_{i,C_i^{\kappa_i},a_i} - \frac{2}{3} \epsilon \; ,
\]
\[
S^*_{i,C_i^1,a_i} =
  \Mtilde_{i,C_i^1}(a_i,p^*_{C_i^{|\mathcal{C}_i|}-i}) \, ,
\]
\[
S^*_{i,C_i^l,a_i} =
  \Mtilde_{i,C_i^l}(a_i,p^*_{C_i^l - i}) + S^*_{i,C_i^{l-1} - i,a_i} \; .
\]
This in turn implies that for all $i$ and $a_i$, we can obtain the
following sequence of inequalities:
\[
\sum_{a'_i} p^*_{i,a'_i} \sum_{l=1}^{\kappa_i}
\Mtilde_{i,C_i^l}(a'_i,p^*_{C_i^l - i}) \geq
\sum_{l=1}^{\kappa_i}
\Mtilde_{i,C_i^l}(a_i,p^*_{C_i^l - i}) - \frac{2}{3}\epsilon
\]
\[
\sum_{a'_i} p^*_{i,a'_i} \sum_{C \in \mathcal{C}_i}
\Mtilde_{i,C}(a'_i,p^*_{C - i}) \geq
\sum_{C \in \mathcal{C}_i}
\Mtilde_{i,C}(a_i,p^*_{C - i}) - \frac{2}{3} \epsilon
\]

\begin{align*}
\sum_{a'_i} p^*_{i,a'_i} \sum_{C \in \mathcal{C}_i}
\left( M'_{i,C}(a'_i,p^*_{C - i}) + \frac{\tau'_i}{2} \right) \geq
\sum_{C \in \mathcal{C}_i}
\left( M'_{i,C}(a_i,p^*_{C - i}) - \frac{\tau'_i}{2} \right) -
  \frac{2}{3} \epsilon
\end{align*}
\begin{align*}
\sum_{a'_i} p^*_{i,a'_i} \sum_{C \in \mathcal{C}_i}
M'_{i,C}(a'_i,p^*_{C - i}) \geq 
\sum_{C \in \mathcal{C}_i}
M'_{i,C}(a_i,p^*_{C - i}) - \kappa_i \, \tau'_i - \frac{2}{3} \epsilon
\end{align*}
\begin{align*}
\sum_{a'_i} p^*_{i,a'_i} \sum_{C \in \mathcal{C}_i}
M'_{i,C}(a'_i,p^*_{C - i}) \geq
\sum_{C \in \mathcal{C}_i}
M'_{i,C}(a_i,p^*_{C - i}) - \frac{1}{3}\epsilon - \frac{2}{3} \epsilon
\end{align*}
\begin{align*}
\textstyle
\sum_{a'_i} p^*_{i,a'_i} \sum_{C \in \mathcal{C}_i}
M'_{i,C}(a'_i,p^*_{C - i}) \geq 
\sum_{C \in \mathcal{C}_i}
M'_{i,C}(a_i,p^*_{C - i}) - \epsilon
\end{align*}
Hence, the corresponding joint mixed-strategy 
$p^*$ is an $\epsilon$-MSNE of the GMhG.
\end{proof}
\begin{claim}
\label{cla:joint_sd}
Within the context of Theorem~\ref{thm:joint_sd}, we have
\begin{align*} 
s_i = & O\left( \frac{m \, \kappa'  \, \max_{j \in \Neigh_i} \sum_{C \in
  \C_j} R_{j,C}}{\epsilon} \right) = O\left( \frac{m \, \kappa' \,  R
        }{\epsilon} \right) \; , \\
s'_i = & O\left( \frac{\bar{R}_i \, \kappa_i}{\epsilon} \right) =
         O\left( \frac{ R' \, \kappa }{\epsilon} \right) \; ,
\end{align*}
where $R \equiv \max_{j \in V} \sum_{C \in \C_j} R_{j,C}$ and
$R' \equiv \max_{j \in V} \bar{R}_j$.
If all the
ranges $R_{j,C}$'s are
bounded by a constant, then 
\begin{align*} 
s_i =& O\left( \frac{m \, \kappa' \, \kappa }{\epsilon} \right) \;
       \text{ and}\\
s'_i =& O\left( \frac{ \kappa^2 }{\epsilon} \right) \; .
\end{align*}
\end{claim}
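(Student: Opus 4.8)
The plan is to derive everything by direct substitution into the two closed forms for $s_i$ and $s'_i$ supplied by Theorem~\ref{thm:joint_sd}, then simplify using the global parameters $m$, $\kappa$, $\kappa'$, $R$, and $R'$. For $s_i$, I start from $s_i = \lceil 6\,|A_i|\,\max_{j\in\Neigh_i}\sum_{C\in\C_j}R_{j,C}(|C|-1)/\epsilon\rceil$. Since $|A_i|\le m$ and every hyperedge $C\in\C_j$ satisfies $|C|\le\kappa'_j\le\kappa'$, I replace $(|C|-1)$ by the uniform bound $\kappa'$ and pull it out of the inner sum, so that $\max_{j\in\Neigh_i}\sum_{C\in\C_j}R_{j,C}(|C|-1)\le\kappa'\max_{j\in\Neigh_i}\sum_{C\in\C_j}R_{j,C}$; dropping the ceiling (which costs an additive $O(1)$) yields the first displayed bound. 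The second equality follows by enlarging the maximization from $\Neigh_i\subseteq V$ to all of $V$, which is exactly the definition of $R$. The two bounds on $s'_i$ are even more immediate: $s'_i=\lceil 3\bar R_i\kappa_i/\epsilon\rceil=O(\bar R_i\kappa_i/\epsilon)$, and substituting $\bar R_i\le R'=\max_j\bar R_j$ together with $\kappa_i\le\kappa$ gives $O(R'\kappa/\epsilon)$.

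For the bounded-range regime I would handle $s_i$ first, since it is routine: if every clique spread obeys $R_{j,C}\le c$ for a constant $c$, then $\sum_{C\in\C_j}R_{j,C}\le\kappa_j c\le\kappa c=O(\kappa)$, hence $R=O(\kappa)$ and $s_i=O(m\kappa'\kappa/\epsilon)$ as claimed. The delicate part is $s'_i$, where I need $R'=\max_j\bar R_j=O(\kappa)$. Here I invoke the inequality $\bar R_i\le\max(\kappa_i R_i,\kappa_i u_i-l_i)$ recorded in Definition~\ref{def:payoff}, so it suffices to show that $R_i=u_i-l_i$, $u_i$, and $l_i$ are all $O(1)$; both arguments of the maximum are then $\kappa_i\cdot O(1)=O(\kappa)$, giving $\bar R_i=O(\kappa)$ and therefore $s'_i=O(\kappa\cdot\kappa/\epsilon)=O(\kappa^2/\epsilon)$.

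The main obstacle is precisely this last reduction, because bounded clique spreads do \emph{not} by themselves bound $R_i=\max_C u_{i,C}-\min_C l_{i,C}$: different local-clique matrices may sit at wildly different baselines (e.g.\ one clique valued in $[100,101]$ and another in $[0,1]$ both have spread $1$, yet $R_i=101$), and one checks directly that $\bar R_i$ itself blows up in such a case. The bridge is the affine-invariance of the equilibrium and $\epsilon$-MSNE conditions noted before Definition~\ref{def:prob}: shifting each $M'_{i,C}$ by an additive constant only shifts $i$'s total payoff $M'_i$ by a constant and hence preserves every (approximate) best response, with the shifts being independent across players since $M'_{i,C}$ feeds only into player $i$'s payoff. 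I would therefore first baseline-align every clique so that $l_{i,C}=0$, which forces $l_i=0$ and $u_i=\max_C u_{i,C}=\max_C R_{i,C}\le c$, so that $R_i=u_i=O(1)$ and $u_i,l_i=O(1)$. After this normalization the bound $\bar R_i\le\max(\kappa_i R_i,\kappa_i u_i-l_i)=O(\kappa)$ holds, completing the bounded-range claim. I expect the substitutions to be entirely mechanical; the only genuine content is recognizing that the constant-range hypothesis must be combined with an equilibrium-preserving shift before it controls $\bar R_i$.
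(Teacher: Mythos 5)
Your proposal is correct, and on the only nontrivial step it takes a genuinely different --- and more careful --- route than the paper. The first displayed bounds and the $R=O(\kappa)$ step are handled essentially identically (direct substitution of $|A_i|\le m$ and $|C|-1\le\kappa'$ into the formulas for $s_i,s'_i$ from Theorem~\ref{thm:joint_sd}, then enlarging the max over $\Neigh_i$ to $V$; the paper does not even write this out). The divergence is in establishing $R'=O(\kappa)$. The paper starts from $\bar R_j\le \kappa_j u_j-\min(\kappa_j l_j,l_j)$ and does a case analysis on the sign of $l_j$, but both branches invoke ``$u_j-l_j$ is bounded by a constant,'' where $u_j=\max_{C} u_{j,C}$ and $l_j=\min_{C} l_{j,C}$. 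As you observe, this does \emph{not} follow from the literal hypothesis that each $R_{j,C}=u_{j,C}-l_{j,C}$ is bounded: cliques sitting at different baselines give bounded per-clique spreads but unbounded $u_j-l_j$, and then $\bar R_j$ genuinely blows up. The paper is thus implicitly leaning on the stronger per-player spread assumption from its ``Normalizing the Payoff Scale'' discussion (and, in the $l_j\ge 0$ branch, also on the global normalization $\min_a M'_j(a)=0$ to keep $l_j$ from drifting upward --- ``$l_j\ge 0$ and $u_j-l_j=O(1)$'' alone do not bound $u_j$). Your patch --- shifting each $M'_{j,C}$ by $-l_{j,C}$, which is per-player and equilibrium-preserving, so that $l_j=0$ and $u_j=\max_C R_{j,C}=O(1)$ --- closes this gap from the literal hypothesis. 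The one caveat is that the shift changes the game's representation, so your bound on $s'_i$ is really a statement about the renormalized game rather than the $\bar R_i$ of the game as given; since the shift preserves $\epsilon$-MSNE and all downstream running-time consequences, that is the right reading for the paper's algorithmic purposes, but it is worth stating explicitly as a preprocessing step.
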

\begin{proof}
First, when all the
ranges $R_{j,C}$'s are
bounded by a constant, 
we have $R = O(\kappa)$.
Furthermore, $\bar{R}_j = \bar{u}_j - \bar{l}_j \le \kappa_j u_j - \min( \kappa_j l_j , l_j )$. When $l_j < 0$, $\bar{R}_j \le \kappa_j (u_j - l_j)$ and hence $\bar{R}_j = O(\kappa_j)$. For the other case of $l_j \ge 0$, $\bar{R}_j \le \kappa_j u_j - l_j \le \kappa_j u_j$. Since $u_j - l_j$ is bounded by a constant and $l_j \ge 0$, $u_j$ must also be bounded by a constant and hence $\bar{R}_j = O(\kappa_j)$. Therefore, $R' = O(\kappa)$. Since both $R$ and $R'$ are $O(\kappa)$, we obtain the bounds on $s_i$ and $s'_i$.
\end{proof}
Note that if $R'$ is bounded by a constant, then $l_i = \omega(-1/\kappa_i)$.

\section{CSP-Based Computational Results}
\label{sec:results}

The CSP formulation in the previous section leads us to the following
computational results based on well-known algorithms for solving
CSPs~\cite[Ch. 5]{Russell_and_Norvig_2003}, and the application of equally well-known
computational results for them~\citep{dechter_book,gottlob14,gottlob16}.

\begin{theorem}
\label{the:hw}
There exists an algorithm that, given as input a number $\epsilon > 0$
and an $n$-player GMhG with maximum local-hyperedge-set size $\kappa$
and maximum number of actions $m$, and whose corresponding CSP has a
hypergraph with hypertree-width $w$, computes an $\epsilon$-MSNE of
the GMhG in time $[ n \left( m \, \kappa \, \kappa' / \epsilon
\right)^{m \kappa'}]^{O(w)}$.
\end{theorem}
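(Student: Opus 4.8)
The plan is to turn the existence and correctness guarantees of Theorem~\ref{thm:joint_sd} into an algorithm by simply \emph{constructing} the GMhG-induced CSP of Section~\ref{sec:CSP} and handing it to a standard hypertree-decomposition-based CSP solver. Correctness comes essentially for free: every GMhG has at least one MSNE by Nash's theorem, and by the first part of Theorem~\ref{thm:joint_sd} the closest grid point (in $\ell_\infty$) to any such MSNE is a feasible assignment of the CSP, so the CSP is guaranteed to be satisfiable; by the second part of Theorem~\ref{thm:joint_sd}, \emph{any} satisfying assignment yields a joint mixed strategy that is an $\epsilon$-MSNE. Thus it suffices to find one solution of the CSP, and the whole problem reduces to bounding the time to build and solve it.

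The algorithm proceeds in four steps. First, fix the discretization parameters $(\tau_i,\tau_i')$, equivalently the sizes $(s_i, s_i')$, exactly as prescribed by Theorem~\ref{thm:joint_sd}. Second, build the constraint set of the GMhG-induced CSP; as noted in Section~\ref{sec:CSP}, tabulating the discretized local-clique expected payoffs $\Mtilde_{i,C_i^l}$ dominates this step and takes time $O(n\,m\,\kappa\,s^{m\kappa'}(s')^m)$, the representation size of the CSP, where $s \equiv \max_i s_i$ and $s' \equiv \max_i s_i'$. Third, compute a hypertree decomposition of width $w$ of the CSP's hypergraph; for fixed $w$ this can be done in polynomial time by the cited results~\citep{gottlob14,gottlob16}. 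Fourth, run the standard bottom-up join/project dynamic program over this decomposition~\citep{dechter_book,gottlob14,gottlob16}: each node of the decomposition is covered by at most $w$ constraint scopes, so the relation materialized there is bounded by the $w$-fold join of individual constraint relations, i.e.\ by the $w$-th power of the largest per-constraint relation, and the join/project passes over the tree cost only a further polynomial factor in the number of variables and constraints.

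For the running time, recall from Section~\ref{sec:CSP} that the largest per-constraint relation has size $O(s^{m\kappa'}(s')^m)$ and that there are $O(n\,m\,\kappa)$ variables and constraints. Hence the solver runs in time polynomial in $n\,m\,\kappa$ times $\left( s^{m\kappa'}(s')^m \right)^{O(w)}$, which dominates the build time for $w \geq 1$. It remains to substitute the bounds of Claim~\ref{cla:joint_sd}: with all ranges $R_{i,C}$ bounded by a constant, $s = O(m\,\kappa'\,\kappa/\epsilon)$ and $s' = O(\kappa^2/\epsilon)$. Then $s^{m\kappa'} = (m\,\kappa\,\kappa'/\epsilon)^{O(m\kappa')}$ is the dominant factor, while $(s')^m = (\kappa^2/\epsilon)^m \leq (m\,\kappa\,\kappa'/\epsilon)^{O(m\kappa')}$ and the polynomial prefactors in $n\,m\,\kappa$ are absorbed into $n^{O(w)}$ and lower-order powers of $(m\,\kappa\,\kappa'/\epsilon)^{m\kappa'}$. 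Collecting terms gives total time $n^{O(w)}(m\,\kappa\,\kappa'/\epsilon)^{O(w\,m\,\kappa')} = [\,n\,(m\,\kappa\,\kappa'/\epsilon)^{m\kappa'}\,]^{O(w)}$, as claimed.

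The conceptual steps --- correctness and CSP-solvability --- are already settled by Theorem~\ref{thm:joint_sd}, and the tractability of bounded-hypertree-width CSPs is a citable black box, so no genuinely new argument is needed. The only real work, and thus the main obstacle, is the complexity bookkeeping of the final step: verifying that raising the per-constraint relation size to the power $O(w)$ and folding in the $O(n\,m\,\kappa)$ polynomial factors collapses cleanly into the stated form $[\,n\,(m\,\kappa\,\kappa'/\epsilon)^{m\kappa'}\,]^{O(w)}$, in particular checking that the expected-payoff domain contribution $(s')^m$ never dominates the probability-domain contribution $s^{m\kappa'}$.
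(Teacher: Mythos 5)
Your proposal is correct and follows essentially the same route as the paper, which states this theorem as a direct consequence of the GMhG-induced CSP construction of Section~\ref{sec:CSP}, the correctness guarantee of Theorem~\ref{thm:joint_sd}, the discretization-size bounds of Claim~\ref{cla:joint_sd}, and the standard result that CSPs of bounded hypertree-width are solvable in time (instance size)$^{O(w)}$. The only point worth making explicit in your bookkeeping is that the domination of $(s')^m$ by $s^{m\kappa'}$ uses $\kappa' \geq 2$ and $\epsilon \leq 1$, both of which hold in the regime of interest.
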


 
For GMhGs with bounded hypertree width $w$, the following corollary 
establishes our main CSP-based result.
\begin{corollary}
There exists an algorithm that, given as input a
GMhG with bounded $w$, outputs an $\epsilon$-MSNE in polynomial time
in the size of the input and $1/\epsilon$, for any
$\epsilon > 0$; hence, the algorithm is an FPTAS. If, instead, we have
$w =
O(\mbox{polylog}(n))$, then the algorithm is a quasi-PTAS.
\end{corollary}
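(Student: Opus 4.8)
The plan is to read off the corollary directly from the running-time bound of Theorem~\ref{the:hw}, $T = [\, n\,(m\,\kappa\,\kappa'/\epsilon)^{m\kappa'}\,]^{O(w)}$, by substituting each hypothesis on $w$ and then checking the definitions of FPTAS and quasi-PTAS in turn. Throughout I would use the paper's standing assumption that the number of actions $m$ is bounded by a constant, together with a bounded maximum hyperedge size $\kappa'$ (so that the exponent $m\kappa'$ is itself a constant), and I would invoke Claim~\ref{cla:joint_sd} to bound the discretization sizes, and hence the base of $T$, by quantities polynomial in $\kappa$, $\kappa'$, $m$, and $1/\epsilon$.

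For the FPTAS claim I would set $w = O(1)$ and abbreviate the base as $B \equiv n\,(m\,\kappa\,\kappa'/\epsilon)^{m\kappa'}$. Because $m\kappa'$ is constant, $B$ is a polynomial in $n$, $\kappa$, $\kappa'$, and $1/\epsilon$; and since each of $n$, $\kappa$, $\kappa'$ is at most polynomial in the representation size of the GMhG (which is $O(n\,\kappa\,m^{\kappa'})$), the base $B$ is polynomial in the input size and $1/\epsilon$. Raising $B$ to the \emph{constant} power $O(w)$ preserves polynomiality, so $T$ is polynomial in the input size and $1/\epsilon$ --- exactly the definition of an FPTAS.

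For the quasi-PTAS claim I would instead substitute $w = O(\polylog(n))$, so that $T = B^{O(\polylog(n))}$. Taking logarithms gives $\log T = O(\polylog(n))\cdot\log B$, and since $B$ is polynomial in the input size (itself polynomial in $n$ for the game classes of interest) and in $1/\epsilon$, we have $\log B = O(\log n + \log(1/\epsilon))$. Hence $\log T = O(\polylog(n)\,(\log n + \log(1/\epsilon)))$, which for every fixed $\epsilon$ is $\polylog(n)$; thus $T = n^{\polylog(n)}$ is quasi-polynomial and the algorithm is a quasi-PTAS.

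The only step with real content, as opposed to routine manipulation of exponents, is the verification that the base $B$ is genuinely polynomial in the input size \emph{and} in $1/\epsilon$; this is precisely where bounded $m$ and bounded $\kappa'$ are indispensable. If either were unbounded, the exponent $m\kappa'$ would grow and $B$ would acquire a factor like $(1/\epsilon)^{m\kappa'}$, destroying the polynomial dependence on $1/\epsilon$ and hence the FPTAS guarantee --- consistent with the known impossibility of an FPTAS for bimatrix games with unbounded $m$ unless $\mathrm{PPAD}\subseteq\mathrm{P}$. Once polynomiality of $B$ is established through Claim~\ref{cla:joint_sd}, both parts of the corollary follow immediately from the arithmetic of the exponent $O(w)$.
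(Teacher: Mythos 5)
Your proposal is correct and follows essentially the same route as the paper, which offers no separate argument for this corollary and treats it as an immediate substitution of $w=O(1)$ (resp.\ $w=O(\polylog(n))$) into the running-time bound of Theorem~\ref{the:hw}. Your explicit observation that bounded $m$ and $\kappa'$ are needed to keep the exponent $m\kappa'$ constant is a worthwhile clarification of the paper's standing assumptions rather than a departure from its proof.
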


Theorem~\ref{the:hw} also implies that we can compute an $\epsilon$-MSNE of a tree-structured polymatrix game in 
$O(n \left( m k / \epsilon \right)^{2m})$. \emph{Note that the running time is polynomial in the maximum neighborhood size $k$.}

The following results are in term of the primal-graph representation
of the GMhG-induced CSP.

\begin{theorem}
There exists an algorithm that, given as input a number $\epsilon > 0$
and an $n$-player GMhG with maximum number of actions $m$,
primal-graph treewidth $w'$ of the corresponding CSP, maximum local-hyperedge-set size $\kappa$, and
maximum local-hyperedge size $\kappa'$, computes an $\epsilon$-MSNE of the game
in time $2^{O(w')} n \log(n) + n [ \left( m \, \kappa \, \kappa' / \epsilon
\right)^m ]^{O(w')}$. 
\end{theorem}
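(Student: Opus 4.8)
The plan is to reduce the computation of an $\epsilon$-MSNE to \emph{solving} the GMhG-induced CSP of Section~\ref{sec:CSP}, and then to solve that CSP by tree-decomposition dynamic programming over its \emph{primal} graph, in exact parallel to how Theorem~\ref{the:hw} proceeds over a hypertree decomposition of the dual hypergraph. By Theorem~\ref{thm:joint_sd}, with the discretization sizes $s_i, s_i'$ prescribed there, (i) every MSNE---one of which always exists---has a nearby grid point that satisfies the CSP, so the CSP is satisfiable, and (ii) every solution of the CSP is an $\epsilon$-MSNE. Hence it suffices to find \emph{any} satisfying assignment. By Claim~\ref{cla:joint_sd}, each probability domain $\Itilde_i$ and each expected-payoff domain $\Itilde_i'$ has size $O(m\kappa\kappa'/\epsilon)$, which I denote $D$.

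First I would observe that the partial-sum variables $S_{i,C,a_i}$ are \emph{functionally determined} by the probability variables: unrolling constraint 1(b) gives $S_{i,C_i^{\kappa_i},a_i} = \sum_{l} \Mtilde_{i,C_i^l}(a_i,p_{C_i^l-i})$, a deterministic function of $\{p_j : j \in N_i - i\}$. Consequently I may regard the CSP as having one meta-variable per player $i$, namely the mixed strategy $\mathbf{p}_i \in \Itilde_i^{|A_i|}$ (subject to normalization), with domain of size $s_i^{|A_i|} = (m\kappa\kappa'/\epsilon)^{O(m)}$, recomputing the $S$-variables on demand. In this view the best-response constraint 1(a) of player $i$ has scope exactly $N_i$, so the primal (Gaifman) graph of the CSP is the neighborhood-clique graph of the game (each $N_i$ a clique), and $w'$ is its treewidth. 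Note this forces $w' \ge \max_i |N_i| - 1$, which is harmless: it merely reflects that each best response must be evaluated against its whole neighborhood, not an added hypothesis.

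Next I would compute a tree decomposition of this primal graph of width $O(w')$ using a standard algorithm, at cost $2^{O(w')} n\log n$ (the first term of the bound), and then run the textbook tree-decomposition / bucket-elimination DP for CSPs~\citep{dechter_book,gottlob14,Russell_and_Norvig_2003}. Because every constraint scope $N_i$ is a clique, it lies inside some bag, so each best-response, normalization, and partial-sum constraint can be assigned to a bag and enforced locally; the DP therefore returns a satisfying assignment whenever one exists, which it does by the reduction above. Processing one bag enumerates the joint strategy profiles of its $\le w'+1$ player-nodes, i.e. $\bigl(\max_i s_i^{|A_i|}\bigr)^{w'+1} = \bigl[(m\kappa\kappa'/\epsilon)^m\bigr]^{O(w')}$ candidates; for each, evaluating the needed discretized local-clique payoffs $\Mtilde_{i,C}(a_i,p_{C-i})$ costs only $\mathrm{poly}(m,\kappa,\kappa')$ and is absorbed into that count since $w'\ge 1$ makes the enumeration factor dominate. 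Over the $O(n)$ bags this gives the second term $n\,[(m\kappa\kappa'/\epsilon)^m]^{O(w')}$, and reading the $\mathbf{p}_i$ off the returned assignment yields the $\epsilon$-MSNE.

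The one genuinely delicate point---and the step I would be most careful about---is the passage from the scalar CSP of Section~\ref{sec:CSP} to the player-level primal graph of treewidth $w'$: I must verify that (a) grouping the $m$ scalar probability variables of each player into a single meta-variable is precisely what turns the base of the exponent from $(m\kappa\kappa'/\epsilon)$ into $(m\kappa\kappa'/\epsilon)^m$, and (b) folding the functionally-determined $S$-variables back into their defining neighborhoods does not inflate the treewidth beyond the neighborhood-clique graph. Everything else is the standard treewidth-parameterized CSP machinery~\citep{gottlob16}; the remaining arithmetic is only to confirm that the per-bag payoff evaluations stay within the stated bound.
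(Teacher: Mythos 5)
Your overall strategy---satisfiability and soundness of the CSP via Theorem~\ref{thm:joint_sd}, then a tree-decomposition computation at cost $2^{O(w')}n\log n$ followed by bucket-elimination DP over the primal graph---is the route the paper intends (it states the theorem as following from standard treewidth-parameterized CSP machinery). But there is one substantive error, and it is exactly the step you flagged as ``harmless.'' By eliminating the partial-sum variables $S_{i,C,a_i}$ as functionally determined and collapsing each player to a single meta-variable, you turn the best-response constraint of player $i$ into a constraint of scope $N_i$, so your primal graph places a clique on every neighborhood and $w' \ge \max_i |N_i| - 1 = k-1$. That is not harmless: it reintroduces the exponential dependence on the maximum degree $k$ that the $S$-variables were introduced precisely to eliminate. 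For a star-shaped tree polymatrix game your $w'$ is $n-1$ and your bound becomes $(mk/\epsilon)^{O(mk)}$, which contradicts the corollary immediately following the theorem, namely a running time of $2^{O(m)}n\log n + n(mk/\epsilon)^{O(m)}$ for tree polymatrix games, and defeats the paper's headline claim of polynomial dependence on $k$.

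The parameter $w'$ in the statement is the treewidth of the primal graph of the CSP \emph{as defined in Section~\ref{sec:CSP}}, with the $S_{i,C,a_i}$ kept as first-class variables. There, constraint 1(a) has scope $\{p_{i,a'_i}\}_{a'_i} \cup \{S_{i,C_i^{\kappa_i},a'_i}\}_{a'_i}$ (size $O(m)$), and each instance of constraint 1(b) links only two consecutive partial-sum variables and the $O(m\kappa')$ probability variables of a single local clique $C_i^l$. This chain structure keeps every constraint scope of size $O(m\kappa')$, so for a tree-structured game the primal treewidth is $O(m\kappa')$ independent of $k$, and the per-bag enumeration over domains of size $O(m\kappa\kappa'/\epsilon)$ yields the stated bound. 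So the fix is simply to not substitute the $S$-variables away: run the tree-decomposition DP on the scalar CSP directly. The rest of your argument (satisfiability from Theorem~\ref{thm:joint_sd}, soundness of any returned assignment, the $2^{O(w')}n\log n$ decomposition cost, and absorbing the $\Mtilde_{i,C}$ evaluations into the per-bag count) then goes through as you wrote it.
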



\begin{corollary}
There exists an FPTAS for computing an approximate MSNE in $n$-player
GMhGs with corresponding $m$, $\kappa$, and $\kappa'$ all bounded by constants, independent
of $n$, and primal-graph treewidth $w' = O(\log(n))$.
\end{corollary}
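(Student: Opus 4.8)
The plan is to obtain this corollary directly from the running-time bound of the theorem immediately preceding it, by substituting the stated structural assumptions and simplifying. Recall that an FPTAS requires a running time polynomial in both the input size and $1/\epsilon$. Since $m$, $\kappa$, and $\kappa'$ are bounded by constants, the representation size of the GMhG is $O(n\,\kappa\,m^{\kappa'}) = O(n)$, so it is enough to exhibit a running time polynomial in $n$ and $1/\epsilon$. I would begin by fixing a constant $B$ with $m,\kappa,\kappa'\le B$, so that the base quantity $(m\,\kappa\,\kappa'/\epsilon)^m$ in the bound is at most $(B^3/\epsilon)^B = O((1/\epsilon)^B)$ --- a \emph{fixed} polynomial in $1/\epsilon$ whose degree does not grow with $n$.

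Next I would dispose of the additive preprocessing term $2^{O(w')}\,n\log n$. Substituting $w' = O(\log n)$ gives $2^{O(w')} = 2^{O(\log n)} = n^{O(1)}$, so this term is polynomial in $n$ and free of $\epsilon$. The remaining term is $n\,[(m\,\kappa\,\kappa'/\epsilon)^m]^{O(w')} = n\,[O((1/\epsilon)^B)]^{O(w')}$, and the feature I would emphasize is that the factor of $n$ sits \emph{outside} the width-dependent exponent. This is precisely the structural gain over Theorem~\ref{the:hw}, whose analogous term has the form $[\,n\cdot\mathrm{poly}(1/\epsilon)\,]^{O(w)}$: there logarithmic width forces an $n^{O(\log n)}$ (quasi-polynomial) blow-up in $n$ and hence only a quasi-PTAS, whereas here the $n$-dependence stays linear for every $w'$.

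The step I expect to require the most care is controlling the surviving factor $[\mathrm{poly}(1/\epsilon)]^{O(w')}$ once the linear $n$-factor has been separated out. With $w' = O(\log n)$ this $w'$-exponent is super-constant, so one must argue explicitly that, measured against the FPTAS budget, the combination of the linear $n$ and this $\epsilon$-dependent factor meets the polynomial bound in $n$ and $1/\epsilon$. Concretely, this is the single place where the ``$n$ outside the exponent'' property of the primal-graph tree-decomposition algorithm must be invoked, and it is exactly the property whose failure leaves Theorem~\ref{the:hw} with only a quasi-PTAS at comparable width. Everything else is routine substitution into the preceding theorem followed by collection of the two resulting terms; the conceptual work lies entirely in verifying that $2^{O(\log n)} = n^{O(1)}$ together with the outside-the-exponent structure keep the bound inside the FPTAS regime, and in contrasting this with the quasi-PTAS that results when either feature is absent.
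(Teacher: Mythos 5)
Your overall route---direct substitution of the structural assumptions into the running-time bound of the immediately preceding theorem---is exactly how the paper treats this corollary (it offers no separate proof), and your handling of the additive preprocessing term is fine: $2^{O(w')}n\log n = n^{O(1)}\cdot n\log n$ is polynomial in $n$ and free of $\epsilon$. The problem is the step you yourself flag as ``requiring the most care'' and then leave open. Pulling the factor of $n$ outside the width-dependent exponent controls the $n$-dependence, but it does nothing for the $\epsilon$-dependence, and the $\epsilon$-dependence is precisely what separates an FPTAS from a PTAS. With $m,\kappa,\kappa'\le B$ and $w'=\Theta(\log n)$, the second term is $n\cdot\bigl[(B^3/\epsilon)^B\bigr]^{O(\log n)} = n^{\,1+O(1)+O(\log(1/\epsilon))}$, which is not bounded by any fixed polynomial in $n$ and $1/\epsilon$: setting $\epsilon=1/n$ makes it $n^{\Theta(\log n)}$, i.e.\ quasi-polynomial. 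So the deferred argument cannot be completed by the route you propose; ``$2^{O(\log n)}=n^{O(1)}$ plus the outside-the-exponent structure'' is insufficient because the quantity raised to the $O(w')$ power is $\mathrm{poly}(1/\epsilon)$, not a constant.

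What the substitution actually yields with $w'=O(\log n)$ is an $n^{O(\log(1/\epsilon))}$-time algorithm---a PTAS/quasi-PTAS-type guarantee, consistent with the pattern of the paper's other corollaries, where a \emph{bounded} parameter gives the FPTAS and a polylogarithmic one only a quasi-PTAS. An FPTAS follows from this theorem only when $w'$ is bounded by a constant, so that the exponent multiplying $\log(1/\epsilon)$ is $O(1)$. If you want to defend the corollary as literally stated, you would need an ingredient beyond Theorem~\ref{the:hw}'s primal-graph analogue---for instance, an argument that the per-bag table size stays $\mathrm{poly}(1/\epsilon)$ with degree independent of $w'$---and your proposal does not supply one. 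As written, it reproduces the paper's (unproved, and in this parameter regime dubious) claim rather than establishing it; you should either restrict to constant $w'$ or downgrade the conclusion for $w'=O(\log n)$.
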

\begin{corollary}
There exists an algorithm that, given as input an $n$-player
polymatrix GG with a tree graph, maximum neighborhood size $k$, and maximum
number of actions $m$, computes an $\epsilon$-MSNE of the polymatrix
GG in time $2^{O(m)} n \log(n) + n \left( m k / \epsilon
\right)^{O(m)} $. If $m$ is
bounded by a constant, then the algorithm is an FPTAS. If, instead, $m=O(\polylog(n))$, then the algorithm is a quasi-PTAS. 
\end{corollary}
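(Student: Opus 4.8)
The plan is to derive this corollary as the specialization of the preceding primal-graph-treewidth theorem to the case where the GMhG is an (undirected) polymatrix game whose game graph is a tree $T$. First I would pin down the parameters: in a polymatrix game every local clique has the form $C=\{i,j\}$, so $\kappa'=2$; the number of local cliques of player $i$ equals its tree-degree, so $\kappa=\max_i\kappa_i\le k$; and, applying Claim~\ref{cla:joint_sd} with $\kappa'=2$ and $\kappa\le k$, the mixed-strategy and expected-payoff discretization sizes become $s_i=O(mk/\epsilon)$ and $s_i'=O(k^2/\epsilon)$, so that each probability block $\mathcal{P}_i=\{p_{i,a_i}\}$ and each partial-sum block $\mathcal{S}_{i,C}=\{S_{i,C,a_i}\}$ ranges over at most $(mk/\epsilon)^{O(m)}$ joint grid configurations.

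The crux, and the step I expect to be the main obstacle, is to show that the primal graph of the GMhG-induced CSP has constant treewidth $w'=O(1)$ once the game graph is a tree. I would prove this by exhibiting an explicit tree decomposition whose skeleton follows $T$, working at the level of per-player variable blocks. The key observation is that the partial-sum variables \emph{serialize} the accumulation of $i$'s expected payoff over its neighbors, so that no constraint ever couples more than a bounded number of blocks: the partial-sum constraint $1(b)$ couples only two consecutive blocks $\mathcal{S}_{i,C_i^{l-1}},\mathcal{S}_{i,C_i^{l}}$ together with the single neighbor block $\mathcal{P}_{j_l}$ (where $C_i^l=\{i,j_l\}$), while the best-response constraint $1(a)$ couples only $\mathcal{P}_i$ with the final block $\mathcal{S}_{i,C_i^{\kappa_i}}$. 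Rooting $T$ and ordering each player's cliques so that the parent edge comes last, I would place, for each $i$, a top bag $\{\mathcal{P}_i,\mathcal{S}_{i,C_i^{\kappa_i}},\mathcal{P}_{\pi(i)}\}$ and a path of chain bags $\{\mathcal{P}_i,\mathcal{S}_{i,C_i^{l}},\mathcal{S}_{i,C_i^{l+1}},\mathcal{P}_{c_l}\}$, one per child edge $\{i,c_l\}$. Each bag contains at most four blocks, so $w'=O(1)$; the one subtlety to verify is the running-intersection property, for which $\mathcal{P}_i$ must be retained in all of $i$'s chain bags so that the top bags of $i$'s children, which already reference $\mathcal{P}_i$ through their own parent-edge cross constraint, attach along a connected subtree.

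With $w'=O(1)$, $\kappa=O(k)$, and $\kappa'=2$ in hand, I would substitute into the preceding theorem's bound $2^{O(w')}n\log(n)+n[(m\kappa\kappa'/\epsilon)^{m}]^{O(w')}$. The first term bounds the cost of constructing the (path/tree) decomposition and the per-block setup, giving $2^{O(m)}n\log(n)$; the second term is the dynamic program over the decomposition, which for each of the $O(n)$ constant-block-width bags enumerates $[(mk/\epsilon)^{m}]^{O(1)}=(mk/\epsilon)^{O(m)}$ joint configurations, yielding $n(mk/\epsilon)^{O(m)}$. Correctness, namely that the returned assignment is an $\epsilon$-MSNE and that the CSP is feasible whenever the game has an MSNE, is inherited verbatim from Theorem~\ref{thm:joint_sd}, so nothing new is needed there.

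Finally I would read off the two regimes. When $m$ is a constant, the exponent $O(m)$ is constant, so $(mk/\epsilon)^{O(m)}=\mathrm{poly}(k,1/\epsilon)$ and the whole bound is polynomial in the input size and in $1/\epsilon$, i.e.\ an FPTAS. When $m=O(\polylog(n))$, the exponent $O(m)=O(\polylog(n))$ turns $(mk/\epsilon)^{O(m)}$ into $2^{O(\polylog(n)\cdot\log(mk/\epsilon))}$, which is quasi-polynomial for fixed $\epsilon$, giving a quasi-PTAS.
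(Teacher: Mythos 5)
Your derivation follows the paper's intended route: the corollary is stated without an explicit proof as an instantiation of the preceding primal-graph-treewidth theorem, and you supply exactly the missing specialization ($\kappa'=2$, $\kappa\le k$, and a chain-structured tree decomposition of the CSP's primal graph threaded along the game tree). The only bookkeeping to tidy up is that the primal-graph treewidth counted in \emph{variables} is $\Theta(m)$ rather than $O(1)$ (each best-response constraint forms a clique on $2m$ variables), which is what actually produces the $2^{O(m)}n\log(n)$ term, and your chain bag for child $c_l$ should pair $\mathcal{P}_{c_l}$ with $\mathcal{S}_{i,C_i^{l-1}}$ and $\mathcal{S}_{i,C_i^{l}}$ rather than $\mathcal{S}_{i,C_i^{l}}$ and $\mathcal{S}_{i,C_i^{l+1}}$ (an off-by-one); neither point changes the width, the running-intersection argument, or the final bound.
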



\section{DP for $\epsilon$-MSNE Computation}
\label{sec:DP}

We present a DP algorithm in the
context of the special, but still important class of tree-structured \emph{polymatrix} games. This is for
simplicity and clarity, and as we later discuss, is
without loss of generality. 
We first designate an arbitrary node as the
\emph{root} of the tree and define the notion of
\emph{parents} and \emph{children} nodes as follows. 
For any node/player $i$,
we
denote by $\pa(i)$ the \emph{single parent of any non-root node} in
the tree and
 by $\Ch(i)$ the \emph{children} of node $i$ in the
root-designated-induced directed tree. 
If $i$ is the root, then $\pa(i)$ is undefined. If $i$ is a leaf, then $\Ch(i) = \emptyset$. 


The two-pass algorithm is similar in spirit to {\em TreeNash} \citep{Kearns_et_al_2001}, 
except that (1) here the messages are $\{-\infty,0\}$, instead of bits $\{0,1\}$; and (2) more distinctly, 
our algorithm implicitly passes messages about the partial-sum of expected payoffs across the siblings.
%

\textbf{Collection Pass.} For each non-root node $i$, 
we denote by $j =
\pa(i)$. We order $\Ch(i)$ as 
$o_1,\ldots,o_{|\Ch(i)|}$. We then apply the
following DP bottom-up (i.e., from leaves to root).
We give an intuition before giving the formal specification. The message $T_{i \to j}(p_i,p_j)$ is 0 iff it is ``OK'' for $i$ to play $p_i$ when $i$'s parent $j$ plays $p_j$ (the notion of OK recursively makes sure that $i$'s children are also OK). The message $B_i(p_i,p_j,S_{o_{|\Ch(i)|}})$ is 0 iff $i$'s best response to $j$ playing $p_j$ is $p_i$, given that $i$ gets a combined payoff of $S_{o_{|\Ch(i)|}}$ from its children. The message $R_{o_{l}}(p_i,S_{o_{l}})$ can be thought of as being implicitly passed from $i$'s child $o_l$ to the next (and back to $i$ from the last child $o_{|\Ch(i)|}$). $R_{o_{l}}(p_i,S_{o_{l}})$ is 0 iff $S_{o_{l}}$ is the maximum payoff that $i$ can get from its first $l$ children when $i$ plays $p_i$ and those children are OK with that. Fig.~\ref{fig:alg} illustrates the message passing.

\begin{figure}
\centering
  \includegraphics[width=0.4\textwidth]{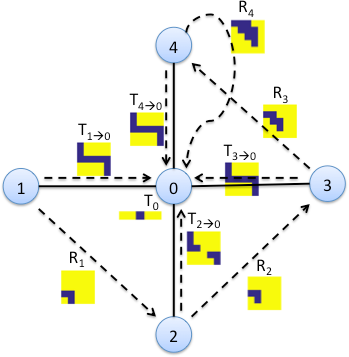}
  \caption{\textbf{DP on a 5-node star polymatrix game.} Solid lines represent edges, broken lines show the final round of message passing. The endpoints of every edge is playing a matching pennies game between them. 
  The visualization of $T_{1 \to 0}$, for example, plots $p_1(a_1 = 0)$ on $x$-axis and $p_0(a_0 = 0)$ on $y$-axis. Dark grid points denote OK (i.e., $T_{1 \to 0} = 0$) and light grid points not OK. The $R_i$ tables are 3-dimensional. Here we only show one slice of $R_i$ values corresponding to $p_0 = (0.5, 0.5)$. The $x$-axis represents $S_i(a_0 = 0)$ (partial sum up to the $i$-th child when player 0 plays 0) and $y$-axis $S_i(a_0 = 1)$. A $0.1$-MSNE computed for this instance is: $p_0 = p_3 = (0.5,0.5)$, $p_1=(0.75,0.25)$, $p_2 = p_4 = (0, 1)$.
  }
  \label{fig:alg}
\end{figure}%

Formally, 
for each arc $(j,i)$ in the
designated-root-induced directed tree (i.e., $j$ is the parent of $i$), and $(p_i,p_j)$
a mixed-strategy pair in the induced grid:
\begin{align*}
T_{i \to j}(p_i,p_j) = & 
\max_{S_{o_{|\Ch(i)|}}} B_i(p_i,p_j,S_{o_{|\Ch(i)|}})
  + R_{o_{|\Ch(i)|}}(p_i,S_{o_{|\Ch(i)|}}) \\
W_{i \to j}(p_i,p_j) = & 
\textstyle 
\argmax_{S_{o_{|\Ch(i)|}}} B_i(p_i,p_j,S_{o_{|\Ch(i)|}})
  + R_{o_{|\Ch(i)|}}(p_i,S_{o_{|\Ch(i)|}})
\end{align*}
where $B_i(p_i,p_j,S_{o_{|\Ch(i)|}}) =$
\begin{align*}
\sum_{a_i} \log\left(\mathds{1} \left[ \sum_{a'_i} p_i(a'_i) \left( \Mtilde_{i,j}(a'_i,p_j) + S_{o_{|\Ch(i)|}}(a'_i)
     \right) \geq \Mtilde_{i,j}(a_i,p_j) + S_{o_{|\Ch(i)|}}(a_i) -
      \epsilon \right] \right)
\end{align*}
and, for $l = 1,\ldots,|\Ch(i)|$,
\begin{align*}
V_{o_l}(S_{o_l},p_{o_l},S_{o_{l-1}}) =& \sum_{a_i}
                                        \log\left(\mathds{1}\left[ S_{o_l}(a_i)
                                        = \Mtilde_{i,o_l}(a_i,p_{o_l})
                                        + S_{o_{l-1}}(a_i) \right]
  \right)\\
F_{o_l}(p_i,S_{o_l},p_{o_l},S_{o_{l-1}}) =& T_{o_l \to i}(p_{o_l},p_i)
                                            +
                                            R_{o_{l-1}}(p_i,S_{o_{l-1}}) +
                                            V_{o_l}(S_{o_l},p_{o_l},S_{o_{l-1}})\\
R_{o_l}(p_i,S_{o_l}) =& \max_{p_{o_l},S_{o_{l-1}}}
                            F_{o_l}(p_i,S_{o_l},p_{o_l},S_{o_{l-1}})\\
W_{o_l}(p_i,S_{o_l}) =& \argmax_{p_{o_l},S_{o_{l-1}}}
                            F_{o_l}(p_i,S_{o_l},p_{o_l},S_{o_{l-1}})
                        \; .
\end{align*}

Following are the boundary conditions: $R_{o_0} \equiv 0$ and $S_{o_0} \equiv 0$,
so that
$F_{o_1}(p_i,S_{o_1},p_{o_1},S_{o_0}) \equiv
  F_{o_1}(p_i,S_{o_1},p_{o_1}) = T_{o_1 \to i}(p_{o_1},p_i)$.
  If $i$ is the root, then 
  $T_{i \to j}(p_i,p_j) \equiv T_i(p_i)$ and
  $W_{i \to j}(p_i,p_j) \equiv W_i(p_i)$. 
If $i$ is a leaf, $T_{i \to j} (p_i,p_j)$ takes a simpler, non-recursive form.

\textbf{Assignment Pass.} For root $i$, set $p^*_i \in
\argmax_{p_i} T_i(p_i)$ and $S^*_{o_{|\Ch(i)|}} \in W_i(p^*_i)$.
Then recursively apply the following
assignment process starting at $o_{|\Ch(i)|}$: for
$l=|\Ch(i)|,\ldots,1$, set $(p^*_{o_l},S^*_{o_{l-1}}) \in
W_{o_l}(p^*_i,S^*_{o_l})$.

\subsection{The Running Time of the DP Algorithm}

A running-time analysis of the DP algorithm presented above yields the
following theorem, which is one of our main algorithmic results of
this paper.
\begin{theorem}
The DP algorithm computes an $\epsilon$-MSNE in a
graphical polymatrix game with a tree graph in time $n\left( \frac{m \, k}{\epsilon}
\right)^{O(m)}$.
\end{theorem}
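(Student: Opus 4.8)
The plan is to establish the claim in two parts, correctness and running time, with essentially all of the work in the running-time count. For correctness, I would observe that the two-pass procedure is exact (non-serial) dynamic programming over the tree for the GMhG-induced CSP of Section~\ref{sec:CSP}, specialized to the polymatrix case where each local clique is an edge ($\kappa'=2$) and $\C_i$ is the set of tree edges incident to $i$. The $\{-\infty,0\}$ messages implement a conjunction of the CSP constraints: a table entry equals $0$ exactly when the corresponding partial assignment extends to a consistent solution on the relevant subtree, and $-\infty$ otherwise, with $B_i$ enforcing the best-response constraint and $V_{o_l}$ the partial-sum constraint. Since an MSNE always exists, the first part of Theorem~\ref{thm:joint_sd} guarantees the CSP is satisfiable, so the root table is not identically $-\infty$, the Assignment Pass recovers a satisfying $(p^*,S^*)$, and the second part of Theorem~\ref{thm:joint_sd} certifies that $p^*$ is an $\epsilon$-MSNE.

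For the running time, I would first fix the two grid sizes by invoking Claim~\ref{cla:joint_sd} in the normalized (bounded-range) regime and specializing to polymatrix games ($\kappa_i=k_i\le k$, $\kappa'=2$): the mixed-strategy grid has $s_i=O(mk/\epsilon)$ points and the expected-payoff grid has $s'_i=O(k^2/\epsilon)$ points. Writing $P$ for the number of discretized mixed strategies of a single player and $Q$ for the number of discretized partial-sum vectors $S_{o_l}(\cdot)$, and noting each is a vector of $m$ grid values, I get $P=O((mk/\epsilon)^m)$ and $Q=O((k^2/\epsilon)^m)$.

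Next I would locate the dominant operation. The tables $T_{i\to j}$, $B_i$, $W_{i\to j}$ are each indexed by at most a mixed-strategy pair together with one partial-sum vector, so they have $O(P^2Q)$ entries, and each entry costs $\mathrm{poly}(m)$ time (evaluating $\Mtilde_{i,j}(a_i,p_j)$ costs $O(m)$, the best-response indicator $O(m^2)$). The recursion for $R_{o_l}(p_i,S_{o_l})$ is the bottleneck: its domain has $P\cdot Q$ entries, and each is a maximization of $F_{o_l}$ over $(p_{o_l},S_{o_{l-1}})$, again $P\cdot Q$ possibilities, giving $O(P^2Q^2)$ per child in the naive count. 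This can be tightened to $O(P^2Q)$ by exploiting that $V_{o_l}$ makes $S_{o_l}$ a deterministic function of $(p_{o_l},S_{o_{l-1}})$, so one sweeps over $(p_i,p_{o_l},S_{o_{l-1}})$ and writes into the unique induced entry; either count suffices for the stated bound.

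Finally I would sum over the tree. Each child $o_l$ of each node triggers one $R_{o_l}$ computation, and the children over all nodes number $n-1$ (the arcs of the tree), so the total is $O(n\,P^2Q^2)$. Substituting the grid sizes gives $P^2Q^2=O\!\big((mk/\epsilon)^{2m}(k^2/\epsilon)^{2m}\big)$, and since $k^2/\epsilon\le(mk/\epsilon)^2$ whenever $\epsilon\le 1$, the factor $(k^2/\epsilon)^{2m}$ is absorbed into $(mk/\epsilon)^{O(m)}$, yielding the claimed $n(mk/\epsilon)^{O(m)}$. I expect the main obstacle to be the careful bookkeeping of which variables each message depends on --- in particular, recognizing that the payoff-grid factor $(k^2/\epsilon)^m$ enters the exponent only a constant number of times and is dominated by the probability grid --- rather than any single hard estimate; the determinism of $V_{o_l}$ is the one structural fact that keeps the per-node cost from growing further.
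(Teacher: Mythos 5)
Your proposal is correct and follows essentially the argument the paper intends: correctness is delegated to Theorem~\ref{thm:joint_sd} (the DP is exact tree dynamic programming for the induced CSP, so satisfiability plus soundness of the CSP gives the $\epsilon$-MSNE), and the running time comes from instantiating Claim~\ref{cla:joint_sd} for polymatrix games ($s_i=O(mk/\epsilon)$, $s'_i=O(k^2/\epsilon)$), bounding the per-arc cost of the $R_{o_l}$ tables by a polynomial in the grid sizes raised to $O(m)$, and summing over the $n-1$ arcs --- the same per-node table-counting the paper carries out explicitly only for the normal-form variant in its appendix. The only loose end, inherited from the paper itself, is that the DP's best-response test uses tolerance $\epsilon$ while CSP constraint 1(a) uses $\tfrac{2}{3}\epsilon$, a constant-factor rescaling that affects neither existence nor the asymptotic bound.
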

\begin{corollary}
The DP algorithm is an FPTAS to compute an $\epsilon$-MSNE in an $n$-player
graphical polymatrix game with a tree graph and a bounded number of actions $m$.
If 
$m=O(\polylog(n))$, 
then the DP algorithm is a quasi-PTAS.
\end{corollary}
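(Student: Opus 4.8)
The plan is to bound the running time by accounting separately for (i) the number of entries in each dynamic-programming table and (ii) the cost of filling a single entry, and then to multiply these and sum the per-node work over the tree. First I would pin down the two discretization parameters. For a tree polymatrix game every local hyperedge has size $2$, so $\kappa' = 2$, and the number of hyperedges of a player equals its number of neighbors, so $\kappa = k$. Substituting these into Claim~\ref{cla:joint_sd} gives the mixed-strategy grid size $s = O(mk/\epsilon)$ and the expected-payoff grid size $s' = O(k^2/\epsilon)$; these are the only two quantities that enter the count.

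Next I would count table sizes, and here the one point that must be handled with care is dimensionality: both a discretized mixed strategy $p_i$ and a partial-sum vector $S_{o_l}$ are indexed by the $m$ actions of a player. Hence the number of grid mixed strategies of any player is at most $(s+1)^m = O(s^m)$, and the number of admissible partial-sum vectors (one entry of $\Itilde'_i$ per action) is at most $(s'+1)^m = O((s')^m)$. Consequently the message $T_{i \to j}$, indexed by the pair $(p_i,p_j)$, has $O(s^{2m})$ entries, while $R_{o_l}$, indexed by $(p_i,S_{o_l})$, has $O(s^m (s')^m)$ entries. For the per-entry cost, after an $O(m s^m)$-per-edge precomputation of the discretized local-clique payoffs $\Mtilde_{i,j}(\cdot,\cdot)$ (which is dominated below), each evaluation of $B_i$ or of $V_{o_l}$ touches all $m$ actions once — the best-response sum on the left-hand side of $B_i$ is computed once per triple and then reused — so a single evaluation is $O(m)$.

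Combining these, filling $T_{i \to j}$ costs $O(m\, s^{2m} (s')^m)$, because each of its $O(s^{2m})$ entries maximizes over the $O((s')^m)$ choices of $S_{o_{|\Ch(i)|}}$, whereas filling $R_{o_l}$ costs $O(m\, s^{2m} (s')^{2m})$, since each of its $O(s^m (s')^m)$ entries maximizes over the $O(s^m (s')^m)$ pairs $(p_{o_l}, S_{o_{l-1}})$. The $R$ recursion therefore dominates, at $O(m\, s^{2m} (s')^{2m})$ per child; summing over the $O(n)$ parent--child arcs of the tree (each non-root node is a child exactly once) and noting that the assignment pass performs only $O(n)$ table look-ups, the total is $O\!\left(n\, m\, s^{2m} (s')^{2m}\right)$.

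Finally I would fold $s'$ into $s$. Since $m \ge 1$ and $\epsilon \le 1$ we have $s' = O(k^2/\epsilon) = O((mk/\epsilon)^2) = O(s^2)$, hence $(s')^{2m} = O(s^{4m})$ and the bound collapses to $O(n\, m\, s^{6m}) = n\,(mk/\epsilon)^{O(m)}$, the extra factor $m$ being absorbed into the exponent. I expect the main obstacle to be exactly the bookkeeping of the preceding step: one must resist treating $S$ as a scalar (it is an $m$-vector, which is what produces the $(s')^{m}$ and $(s')^{2m}$ factors), and one must check that the payoff-grid resolution $s'$ is only \emph{polynomially} larger than $s$, so that it is swallowed by the exponent $O(m)$ rather than contributing an independent blow-up in the running time.
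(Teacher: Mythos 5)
Your accounting is correct and matches the analysis the paper intends (and carries out explicitly only for the normal-form variant in the appendix): the $R_{o_l}$ tables dominate at $O(m\,s^{2m}(s')^{2m})$ per child, and with $s=O(mk/\epsilon)$, $s'=O(s^2)$ this yields the theorem's bound $n\,(mk/\epsilon)^{O(m)}$. You stop one (trivial) sentence short of the corollary itself — for constant $m$ this is polynomial in the input size and $1/\epsilon$, hence an FPTAS, and for $m=O(\polylog(n))$ it is quasi-polynomial — but that step is immediate from your bound.
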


\section{Further Refinement}
We describe a more refined alternative to the
GMhG-induced CSP that reduces the dependency on
$\kappa$. We are also able to restrict the expected payoff grid to [0, 1], even though the local-clique payoffs of a player may fall outside of [0, 1].
The main idea is to evaluate the expressions
involving the expected
local-clique payoffs matrices $\Mtilde_{i,C}(a_i,p_{C-i})$ in a
smart way by decomposing the sum involving the expectation,
considering one player mixed-strategy at a time, and
projecting to the discretized payoff space after evaluating each term
in the sum. This approach gives us an
FPTAS for tree graphical games (in normal form) and
bounded number of actions, for which the best known approximation
result to-date is a quasi-PTAS. 
We present the main result below.

\begin{theorem}
\label{sec:GGThm}
There exists a DP algorithm that computes an $\epsilon$-MSNE in a
tree graphical game in time 
$n \; \left( \frac{mk}{\epsilon}
\right)^{3m+2} \; O\left( \left( m^k
  \right)^2\right)$.
If $m$ is bounded, then the
running time is $\mathrm{poly}\left( n\,
  m^k,\frac{1}{\epsilon} \right)$ and the algorithm is an FPTAS. If $m=O(\polylog(n))$, 
then this algorithm is a quasi-PTAS.
\end{theorem}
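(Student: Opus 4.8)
The plan is to adapt the two-pass tree DP already developed for polymatrix games (the collection/assignment passes built from the messages $T_{i\to j}$, $B_i$, $R_{o_l}$, $V_{o_l}$, and $F_{o_l}$) to the graphical-game setting, where the decisive change is that a node's payoff $M'_i(a_{N_i})$ is a single, non-decomposable table over the entire tree-neighborhood $N_i = \{i\}\cup\{\pa(i)\}\cup\Ch(i)$ rather than a sum of pairwise terms. First I would root the tree and identify $N_i - i$ with the parent together with the children $o_1,\dots,o_{|\Ch(i)|}$. Since a graphical game is a GMhG with $\kappa_i = 1$ and $\kappa'_i = k_i$, Definitions~\ref{def:prob} and~\ref{def:payoff} apply, and Claim~\ref{cla:joint_sd} gives a mixed-strategy grid of size $s_i = O(mk/\epsilon)$ per action; recall also that the min-$0$/max-$1$ normalization of each player's local payoff is without loss of generality and efficiently computable for graphical games.

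The core of the argument is the ``smart'' evaluation of the discretized expected payoff $\Mtilde_{i}(a_i,p_{N_i-i})$. I would fix a grid strategy for each neighbor and absorb the neighbors one at a time in a fixed order, projecting the running partial expectation onto the payoff grid $\Itilde'_i$ after each neighbor is marginalized. The observation that makes this work is that every partial expectation is a convex combination of table entries $M'_i(a_{N_i})\in[0,1]$, so all intermediate values remain in $[0,1]$; this justifies pinning $\Itilde'_i$ to $[0,1]$, i.e.\ $\bar R_i = 1$, \emph{independent of} $k$. For the error, each marginalization step takes a convex combination of already-rounded values and introduces at most one fresh projection error of $\tau_i'/2$, so the $k_i-1$ steps accumulate at most $(k_i-1)\tau_i'/2 = O(k\tau')$ rather than any multiplicative blow-up. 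Setting $\tau'_i = \Theta(\epsilon/k)$, i.e.\ $s'_i = O(k/\epsilon)$, keeps $|\Mtilde_i - M'_i|$ within the best-response slack, so the correctness half of the statement reduces to the natural refinement of Theorem~\ref{thm:joint_sd} and Lemma~\ref{lem:sd} in which the single clique's evaluation is itself a length-$(k-1)$ projected sum: every solution of the induced CSP is an $\epsilon$-MSNE, and every MSNE rounds to a solution.

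With evaluation and correctness in hand, I would carry over the partial-sum-over-children structure of the polymatrix DP, replacing the additive transition $V_{o_l}$ (where $S_{o_l} = \Mtilde_{i,o_l}(a_i,p_{o_l}) + S_{o_{l-1}}$) by a marginalize-and-project transition that folds child $o_l$'s chosen strategy (feasible per $T_{o_l\to i}$) into the running discretized expected-payoff vector over the actions of $i$; feasibility of distinct children decouples because their sub-trees are disjoint, and the $\{-\infty,0\}$ encoding of OK/not-OK is unchanged. For the running time I would count, at each of the $n$ nodes, the grid states indexing the messages: the pair $(p_i,p_j)$ contributes $(s^m)^2$, the discretized expected-payoff vector over $a_i$ contributes $(s')^m$, and the remaining low-order loop and normalization factors contribute $(mk/\epsilon)^2$, so careful accounting of these grid and loop factors yields the exponent $3m+2$. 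Multiplying by the $O((m^k)^2)$ cost of sweeping the size-$m^k$ payoff table during the marginalize-and-project evaluation gives the stated $n\,(mk/\epsilon)^{3m+2}\,O((m^k)^2)$. The FPTAS/quasi-PTAS conclusions then follow: for constant $m$, both $(mk/\epsilon)^{3m+2}$ and $(m^k)^2$ are polynomial in the input size $\Theta(n\,m^k)$ and $1/\epsilon$, and for $m=O(\polylog(n))$ they are quasi-polynomial.

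The main obstacle, which I expect to absorb most of the work, is precisely the non-additivity of graphical-game payoffs: unlike the polymatrix case, a child's contribution to $i$'s expected payoff cannot be summarized by a scalar, since it stays entangled with the other children through the joint table $M'_i$. The entire approach hinges on showing that the one-neighbor-at-a-time marginalization with per-step projection (i) never leaves $[0,1]$, so the payoff grid has unit width regardless of $k$, and (ii) accumulates only $O(k)$ projection errors, so that $s' = O(k/\epsilon)$ suffices; establishing both simultaneously and interleaving them correctly with the children-feasibility constraints inherited from the sub-tree messages so that the partial-sum DP over children keeps a state bounded by $(mk/\epsilon)^{O(m)}$ is the delicate part of the construction.
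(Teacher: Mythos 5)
Your overall strategy is the paper's: normalize the single local clique of each player to $[0,1]$, marginalize the neighbors one at a time, project the running partial expectation onto a uniform grid on $[0,1]$ after each step (so that all intermediate values remain in $[0,1]$ by convexity, and the accumulated projection error after $t$ steps is at most $t\,\tau'_i/2 \le (k_i-1)\tau'_i/2$), take $s_i = O(mk/\epsilon)$ and $s'_i = O(k/\epsilon)$, and run the two-pass tree DP with a marginalize-and-project transition replacing the additive one. This is exactly the refined discretization of Definition~\ref{def:payoff_refined}, the partial-conditional-expectation variables $E_{i,C,a_{C-[o_t]}}$ of the refined CSP, the correctness argument of Theorem~\ref{thm:joint_sd_refined}, and the DP of Section~\ref{sec:DPGG}.

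The one point where your sketch goes astray --- and it is precisely the point you flag as ``the delicate part'' without resolving --- is the identity of the DP state. In your running-time accounting you take the payoff state to be a ``discretized expected-payoff vector over the actions of $i$,'' contributing $(s')^m$; but, as you yourself observe in your final paragraph, a child's contribution cannot be summarized per action of $i$, because it stays entangled with the parent and the unprocessed children through the joint table $M'_i$. The paper does \emph{not} keep the state at size $(mk/\epsilon)^{O(m)}$: the object passed across the children is the entire partially marginalized table $E_{o_l}(a_{N_i-[o_l]})$, indexed by the actions of $i$, of the parent, \emph{and} of every child not yet absorbed --- up to $m^{k_i-1}$ entries. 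This is affordable because the FPTAS is measured against the normal-form representation size $\Theta(n\,m^k)$, and manipulating these tables is exactly where the $O\left(\left(m^k\right)^2\right)$ factor comes from; your attribution of that factor to ``sweeping the payoff table during evaluation'' is the same cost attached to the wrong object. Correspondingly, the exponent $3m+2$ arises from enumerating $(p_i,p_j)$ and the child strategy $p_{o_l}$, i.e., $(s^m)^2\cdot s^m$ times low-order grid factors, not from enumerating an $(s')^m$-sized payoff vector. With the state corrected in this way, your argument coincides with the paper's.
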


\subsection{Discretization Scheme: Refined Version}

\begin{definition}
\label{def:payoff_refined}
Consider the uncountable set $I = [0,1]$. For each player $i$, 
approximate $I$ by a finite grid defined by the set $\Itilde'_i =
\{0, \tau'_i, 2 \tau'_i, \ldots, (s'_i-1) \tau'_i, 1\}$ of values
separated by the same distance $\tau'_i = 1/s'_i$ for some integer
$s'_i$. Thus $|\Itilde'_i| = s'_i + 1$. Then, for any value $v \in I$,
we now define its projection $\widetilde{v} \equiv \Proji{v}$ to
$\Itilde'_i$ such that $|\widetilde{v} -  v| \leq \tau'_i/2$. 
\end{definition}

\subsection{GMhG-Induced CSP: Refined Version}
\label{sec:CSPGG}

We now present a more complex generalization of the simpler GMhG-induced CSP.
\begin{itemize}
\item \emph{Variables:} for all $i$, $a_i$, a variable $p_{i,a_i}$
  corresponding to the mixed-strategy/probability that player $i$
  plays pure strategy $a_i$ and, for all $C \in \C_i$, a
  variable $S_{i,C,a_i}$ corresponding to some \emph{scale-normalized partial sum} of the
  expected payoff of player $i$ based on an ordering of the
  local-clique/hyperedge elements
  of $\mathcal{C}_i$, and given an ordering of $C - i =
  \left\{o_1,o_2,\ldots,o_{|C|-1}\right\}$, a variable
  $E_{i,C,a_{C-[o_t]}}$ corresponding to some \emph{scale-normalized partial
    conditional}
  expected payoff of player $i$; that is, formally, if $\mathcal{P}_i \equiv
  \bigcup_{a_i} \{ p_{i,a_i} \}$, $\mathcal{S}_{i,C}
  \equiv \bigcup_{a_i} \{ S_{i,C,a_i} \}$, and $\mathcal{E}_{i,C}
  \equiv \bigcup_{t=1}^{|C|-1} \bigcup_{a_{C-[o_t]}} \{ E_{i,C,a_{C-[o_t]}} \}$, then the set of all variables is
  $\bigcup_i \left( \mathcal{P}_i \cup \bigcup_{C \in \mathcal{C}_i}
    \left( \mathcal{S}_{i,C} \bigcup \mathcal{E}_{i,C} \right)\right)$.
\item \emph{Domains:} the domain of each variable $p_i(a_i)$ is
  $\widetilde{I}_i$, while that of each partial-sum variable
  $S_{i,C,a_i}$ and each partial-conditional-expectation variable $E_{i,C,a_{C-[o_t]}}$ is $\Itilde'_i$.
\item \emph{Constraints:} for each $i$, 
\begin{enumerate} 
\item \emph{Best-response and partial-sum expected local-clique
    payoff:}
Compute a hyper-tree decomposition of the local hypergraph induced
  by
  hyperedges $\mathcal{C}_i$; then \emph{order} the set of local-cliques $\mathcal{C}_i$
of each player $i$ such that $\mathcal{C}_i \equiv
\{C_i^1,C_i^2,\ldots,C_i^{\kappa_i}\}$, where the superscript
denotes the corresponding order of the local-cliques of player $i$,
and the order is consistent with the hypertree decomposition of the
local hypergraph, in the standard (non-serial) DP-sense
used in constraint and probabilistic graphical models~\citep{Dechter_2003,Koller+Friedman:09}; and 
 for any $a_i$,
\begin{enumerate}
\item 
\[
\sum_{a'_i} p_{i,a'_i} S_{i,C_i^{\kappa_i},a'_i} \geq
S_{i,C_i^{\kappa_i},a_i} - \frac{2}{3 \, \sum_{C \in \C_i} R_{i,C}}  \epsilon
\]
\item $S_{i,C_i^1,a_i} = E_{i,C_i^1,a_i}$, and for $l = 2,\ldots,\kappa_i$, 
\[
  S_{i,C_i^l,a_i} = \Proji{\frac{R_{i,C_i^l} \, E_{i,C_i^l,a_i} + \left(\sum_{r=1}^{l-1} R_{i,C_i^r} \right) \, S_{i,C_i^{l-1},a_i}}{\sum_{r=1}^l R_{i,C_i^r}}}
\]
\item for each set $C \in \mathcal{C}_i$, order the
  elements of $C - i$ such that $C - i =
  \left\{o_1,o_2,\ldots,o_{|C|-1}\right\}$, then set
  $E_{i,C,a_{C-[o_1]}} = \Proji{\sum_{a_{o_1}} p_{o_1,a_{o_1}}
    \left( \frac{M'_{i,C}(a_{o_1}, a_{C-[o_1]})-l_{i,C}}{R_{i,C}} \right) }$ and for $t =
  2,\ldots,|C| -1$,
\[
E_{i,C,a_{C-[o_t]}} = \Proji{\sum_{a_{o_t}} p_{o_t,a_{o_t}} E_{i,C,a_{C-[o_{t-1}]}}}
\]
\end{enumerate}
\item \emph{Normalization:} $\sum_{a_i} p_{i,a_i} = 1$
\end{enumerate}
\end{itemize}

The number of variables of the CSP is $O\left( n \, \kappa \, m^{\kappa'} 
\right)$, which is larger than the version of the CSP presented in the
main body, but \emph{exactly} the worst-case representation size of the
GMhG. 
The normalization of $M'_{i,C}(a_C)$ in 1(c) above takes
constant time, assuming we have precomputed $l_{i,C}$ and $u_{i,C}$,
which take $O(m^{|C|})$, which is the representation size of $M'_{i,C}$, for each $i$ and $C \in \C_i$, of course. 
The total number of constraints is 
 $O\left( n \, \kappa \, m^{\kappa'} 
\right)$, which is also larger than the version of the CSP presented in the
main body, but \emph{exactly} the worst-case representation size of the
GMhG. 
The maximum number of variables in any constraint
is $O(m)$, which is smaller than the version of the CSP presented in the
main body by a factor of $\kappa'$. 
Given a hyper-tree decomposition, the amount of time to build the
constraint set using a tabular representation is $O(n \, \kappa \,
m^{\kappa'+ 1} \, 
s^m (s')^{m+1} )$. 

In summary, the \emph{representation size of the
  GMhG-induced CSP} presented above,
  using a tabular representation, is 
$O(n \, \kappa \,
m^{\kappa'+ 1} \, 
s^m (s')^{m+1} )$. 

\emph{Note the key reduction in the dependence on $\kappa'$ from the
  analogous expression given for the
CSP: the parameter $\kappa'$ only appears in the exponent
of $m$, as it also does in the representation size of the GMhG, and
not in the exponent of $s$.}

\begin{theorem}{ {\bf (Sparse Joint MSNE and Expected-Payoff
      Representation Theorem: Refined Version)}}
\label{thm:joint_sd_refined}
Consider any GMhG and any $\epsilon$, \[0  < \epsilon \leq 2 \,
\min_{i \in V} \, \frac{\sum_{C \in \mathcal{C}_i} R_{i,C} \, (|C| -
  1)}{\max_{C' \in \mathcal{C}_i} |C'| - 1} \, .\] 
Setting, for all
players $i$, the pair $(\tau_i,\tau_i')$ such that
\[
\tau_i = \frac{\epsilon}{6 \, |A_i| \, \max_{j \in \Neigh_i}
    \sum_{C \in \mathcal{C}_j} R_{j,C} \, (|C| - 1)}
\]
and
\[
\tau'_i = \frac{\epsilon}{3 \, \left( 
    \sum_{C \in \mathcal{C}_i} R_{i,C} \, (|C| + \kappa_i - l) \right)} \; ,
\]
so that the discretization sizes 
\begin{align*}
s_i = \left\lceil \frac{1}{\tau_i} \right\rceil 
\end{align*}
and
\[
s'_i = \left\lceil \frac{1}{\tau'_i} \right\rceil 
\]
respectively,
is sufficient to guarantee that for {\em every\/} MSNE of the game, its closest (in $\ell_\infty$ distance) joint mixed strategy in the induced discretized space 
is a solution of the refined version of the GMhG-induced CSP, and 
that any solution to the refined version of the GMhG-induced CSP (in discretized space) is an 
$\epsilon$-MSNE of the game.
\end{theorem}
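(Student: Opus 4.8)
The plan is to mirror the two-part structure of the proof of Theorem~\ref{thm:joint_sd}, with the new work concentrated on controlling how the projection operator $\Proji{\cdot}$ propagates error through the refined, scale-normalized recursions 1(b) and 1(c). First I would isolate a single \emph{master error bound} used in both directions: for \emph{any} grid-feasible joint mixed strategy $p$ (satisfying normalization constraint~2) and the values $E,S$ obtained from $p$ via constraints 1(b)--1(c), the un-normalized quantity $\left( \sum_{C \in \C_i} R_{i,C} \right) S_{i,C_i^{\kappa_i},a_i}$ approximates the true un-normalized expected payoff $\sum_{C \in \C_i} \left( M'_{i,C}(a_i,p_{C-i}) - l_{i,C} \right)$ to within $\epsilon/6$, for every $i$ and $a_i$. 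Crucially, this bound is agnostic to the origin of $p$, so it serves both directions verbatim.

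To establish the master bound I would proceed in two stages. For the inner recursion 1(c), observe that the true target $\hat{E}_{i,C,a_i} = (M'_{i,C}(a_i,p_{C-i}) - l_{i,C})/R_{i,C} \in [0,1]$ is computed by marginalizing one neighbor at a time; since each step $\sum_{a_{o_t}} p_{o_t,a_{o_t}}(\cdot)$ is a convex combination (using $\sum_{a_{o_t}} p_{o_t,a_{o_t}} = 1$), it preserves any pre-existing error, and only the $|C|-1$ projections contribute, each at most $\tau'_i/2$; hence the error in $E_{i,C,a_i}$ is at most $(|C|-1)\tau'_i/2$, i.e., $R_{i,C}(|C|-1)\tau'_i/2$ in un-normalized units. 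For the outer recursion 1(b), I would first check that the weighted-average rule is \emph{exact} for the true partial sums $\hat{S}_{i,C_i^l,a_i} = \frac{\sum_{r \le l}(M'_{i,C_i^r}(a_i,p_{C_i^r-i}) - l_{i,C_i^r})}{\sum_{r \le l} R_{i,C_i^r}}$, so that tracking the un-normalized error $\left(\sum_{r \le l} R_{i,C_i^r}\right)\,|S_{i,C_i^l,a_i} - \hat{S}_{i,C_i^l,a_i}|$ yields a clean additive recurrence: at each level it grows by the incoming $E$-error $R_{i,C_i^l}(|C_i^l|-1)\tau'_i/2$ plus one $S$-projection term $\left(\sum_{r \le l} R_{i,C_i^r}\right)\tau'_i/2$. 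Summing over $l$ and reorganizing yields a total of at most $\tfrac{\tau'_i}{2}\sum_{C \in \C_i} R_{i,C}(|C| + \kappa_i - 2)$, and plugging in the stated $\tau'_i$ collapses this to exactly $\epsilon/6$.

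With the master bound in hand, both directions are short. For the first part I would round an arbitrary MSNE $p'$ to its nearest grid point $p$; by the choice of $\tau_i$ and Lemma~\ref{lem:sd}, $p$ is an $(\epsilon/3)$-MSNE. Setting $E,S$ by the recursions makes 1(b), 1(c), and~2 hold by construction, so only the best-response constraint 1(a) remains. Passing the $(\epsilon/3)$ best-response inequality (in true un-normalized payoffs) through the master bound once on each side introduces two $\epsilon/6$ terms, producing a slack of $\epsilon/3 + \epsilon/6 + \epsilon/6 = 2\epsilon/3$ in un-normalized units, which is exactly the $\frac{2}{3\sum_{C} R_{i,C}}\epsilon$ slack of 1(a) after dividing by $\sum_{C \in \C_i} R_{i,C}$. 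For the second part, the master bound holds for the $p$ of any CSP solution; starting from 1(a), the identical two-sided conversion turns the $2\epsilon/3$ slack back into an $\epsilon$ best-response slack in the true expected payoffs $M'_i(a_i,p_{N_i-i})$, certifying that $p$ is an $\epsilon$-MSNE.

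The main obstacle I anticipate is the second stage of the master bound: the bookkeeping for the cross-clique accumulation in 1(b). The scale-normalization by the ranges $R_{i,C}$ is doing double duty---keeping every intermediate value in $[0,1]$ so that a single uniform grid $\Itilde'_i$ suffices, while simultaneously guaranteeing that the convex-combination steps never amplify error. Verifying that the resulting additive recurrence telescopes to precisely $\sum_{C \in \C_i} R_{i,C}(|C| + \kappa_i - 2)$, matching the denominator of $\tau'_i$, is the delicate combinatorial core of the argument; everything else is a routine re-packaging of the Theorem~\ref{thm:joint_sd} calculation.
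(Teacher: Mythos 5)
Your proposal follows essentially the same route as the paper's proof: round an MSNE to the grid, construct $E$ and $S$ by the CSP recursions so that constraints 1(b), 1(c), and 2 hold by construction, bound the accumulated projection error by induction, and run the two-sided conversion in both directions with the $\epsilon/3+\epsilon/6+\epsilon/6$ and $2\epsilon/3+\epsilon/6+\epsilon/6$ accounting; your ``master error bound'' is exactly the paper's pair of inductive claims on $E^*$ and $S^*$, and your observation that it is agnostic to the origin of $p$ is how the paper reuses it in the converse direction.

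The one concrete flaw is in your final telescoping. Your recurrence for the un-normalized error, $U_l \le U_{l-1} + R_{i,C_i^l}(|C_i^l|-1)\frac{\tau'_i}{2} + \bigl(\sum_{r\le l} R_{i,C_i^r}\bigr)\frac{\tau'_i}{2}$, is right, but summing it yields $U_{\kappa_i} \le \frac{\tau'_i}{2}\bigl[\, R_{i,C_i^1}(|C_i^1|+\kappa_i-2) + \sum_{l=2}^{\kappa_i} R_{i,C_i^l}(|C_i^l|+\kappa_i-l)\,\bigr]$, not $\frac{\tau'_i}{2}\sum_{C}R_{i,C}(|C|+\kappa_i-2)$: the weight $R_{i,C_i^l}$ appears in the $S$-projection terms of levels $l,l+1,\ldots,\kappa_i$, i.e., $\kappa_i-l+1$ times, not $\kappa_i-1$ times. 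Your reorganized expression is in general \emph{larger} than the denominator $\sum_{C\in\C_i}R_{i,C}(|C|+\kappa_i-l)$ of the stated $\tau'_i$ (e.g., with all ranges $1$, all $|C|=2$, and $\kappa_i=4$ it gives $16$ versus $14$), so the claim that plugging in $\tau'_i$ ``collapses this to exactly $\epsilon/6$'' fails as written, and the second direction would terminate with slack strictly greater than $\epsilon$. The correct sum is bounded above by $\frac{\tau'_i}{2}\sum_{l}R_{i,C_i^l}(|C_i^l|+\kappa_i-l)$, which does equal $\epsilon/6$ under the stated $\tau'_i$ and matches the paper's inductive bound $\bigl|S^*_{i,C_i^t,a_i}-\cdot\bigr|\le\frac{\sum_{l=1}^t R_{i,C_i^l}(|C_i^l|+t-l)}{\sum_{l=1}^t R_{i,C_i^l}}\cdot\frac{\tau'_i}{2}$. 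With that bookkeeping repaired, the rest of your argument goes through verbatim.
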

\begin{proof}
Let $p'$ be an MSNE of the GMhG. Let $p$ be the mixed strategy closest, in
$\ell_\infty$, 
to $p'$ in the grid induced by the combination of the discretizations
that each $\tau_i$ generates. For all $i$ and $a_i$,
set $p^*_{i,a_i} = p_i(a_i)$. For all $i$ and $C \in \C_i$, such that
$C-i \equiv \{o_1,o_2,\ldots,o_{|C|-1} \}$, first set 
\[
E^*_{i,C,a_{C-[o_1]}} \equiv \Proji{ \frac{M'_{i,C}(a_{C-[o_1]},p_{i,o_1}) - l_{i,C}}{R_{i,C}}}.
\]
Then for $t=2,3,\ldots,|C|-1$, recursively set
\[
E^*_{i,C,a_{C-[o_t]}} \equiv \Proji{ \sum_{a_{o_t}} p_{i,a_{o_t}}
  E^*_{i,C,a_{C-[o_{t-1}]}}} \; .
\]
Finally, given $\C_i \equiv \left\{
  C_i^1,C_i^2,\ldots,C_i^{\kappa_i} \right\}$, for all $i$ and $a_i$, first set 
\[
S^*_{i,C_i^1,a_i} \equiv E^*_{i,C_i^1,a_i}.
\]
Then for all $l=2,3,\ldots,\kappa_i$, recursively set
\[
S^*_{i,C_i^l,a_i} \equiv \Proji{\frac{R_{i,C_i^l} \, E^*_{i,C_i^l,a_i} +
    \left(\sum_{r=1}^{l-1} R_{i,C_i^r} \right) \,
    S^*_{i,C_i^{l-1},a_i}}{\sum_{r=1}^l R_{i,C_i^r}}} \; .
\]
The resulting assignment satisfies constraints 1(b), 1(c), and 2 of the CSP by construction. We next show that constraint 1(a) is also satisfied. By the setting of $\tau_i$ and Lemma~\ref{lem:sd}, we have that
$p$ is an $(\epsilon/3)$-MSNE, and thus also an
$\epsilon$-MSNE. In addition, for all $i$ and $a_i$, we have the
following sequence of inequalities:
\begin{align*}
\sum_{a'_i} p_i(a'_i) \sum_{C \in \mathcal{C}_i}
M'_{i,C}(a'_i,p_{C-i}) \geq 
\sum_{C \in \mathcal{C}_i}
M'_{i,C}(a_i,p_{C-i}) - \frac{\epsilon}{3}.
\end{align*}
\begin{align*}
\sum_{a'_i} p^*_{i,a'_i} \sum_{l=1}^{\kappa_i} 
M'_{i,C_i^l}(a'_i,p^*_{C_i^l-i}) \geq 
\sum_{l=1}^{\kappa_i} 
M'_{i,C_i^l}(a_i,p^*_{C_i^l-i}) - \frac{\epsilon}{3}.
\end{align*}
For all $i$ and $C \in \C_i$, by the Principle of Mathematical
Induction we can show that for all
$t=1,2,\ldots,|C|-1,$ we have 
\begin{align*}
\left| E^*_{i,C,a_{C-[o_t]}} - \frac{M'_{i,C}(a_{C-[o_t]},p^*_{[o_t]})
  - l_{i,C}}{R_{i,C}} \right| \leq t \, \frac{\tau'_i}{2} \; .
\end{align*}
Similarly, we can show that for all $i$, $a_i$, and $t=1,2,\ldots,\kappa_i$, 
\begin{align*}
\left| S^*_{i,C_i^t,a_i} - \frac{\sum_{l=1}^t
  M'_{i,C_i^l}(a_i,p^*_{C_i^l-i}) - l_{i,C_i^l}}{\sum_{l=1}^t
  R_{i,C_i^l}} \right| \leq  \frac{\sum_{l=1}^t R_{i,C_i^l}
  (|C_i^l| + t - l) }{\sum_{l=1}^t
  R_{i,C_i^l}} \cdot \frac{\tau'_i}{2} \; .
\end{align*}
From the last condition, we obtain
\begin{align*}
M_i'(p_{N(i)}) \leq \left( \sum_{l=1}^{\kappa_i}
  R_{i,C_i^l} \right) \left( \sum_{a'_i} p^*_{i,a'_i} S^*_{i,C_i^{\kappa_i},a'_i} \right)  +
  \sum_{l=1}^{\kappa_i} l_{i,C_i^l} + \sum_{l=1}^{\kappa_i} R_{i,C_i^l}
  (|C_i^l| + \kappa_i - l) \frac{\tau'_i}{2} 
\end{align*}
and
\begin{align*}
M_i'(a_i,p_{\Neigh(i)}) \geq \left( \sum_{l=1}^{\kappa_i}
  R_{i,C_i^l} \right) S^*_{i,C_i^{\kappa_i},a_i} +
  \sum_{l=1}^{\kappa_i} l_{i,C_i^l} - \sum_{l=1}^{\kappa_i} R_{i,C_i^l}
  (|C_i^l| + \kappa_i - l) \frac{\tau'_i}{2} \; .
\end{align*}
Combining, we obtain the following sequence of inequalities.
\begin{align*}
\left( \sum_{l=1}^{\kappa_i}
  R_{i,C_i^l} \right) \left( \sum_{a'_i} p^*_{i,a'_i} S^*_{i,C_i^{\kappa_i},a'_i} \right)  -
  \sum_{l=1}^{\kappa_i} l_{i,C_i^l} + \sum_{l=1}^{\kappa_i} R_{i,C_i^l}
  (|C_i^l| + \kappa_i - l) \frac{\tau'_i}{2}  \geq \\
\left( \sum_{l=1}^{\kappa_i}
  R_{i,C_i^l} \right) S^*_{i,C_i^{\kappa_i},a_i} -
  \sum_{l=1}^{\kappa_i} l_{i,C_i^l} - \sum_{l=1}^{\kappa_i} R_{i,C_i^l}
  (|C_i^l| + \kappa_i - l) \frac{\tau'_i}{2} - \frac{1}{3} \epsilon\\
\sum_{a'_i} p^*_{i,a'_i} S^*_{i,C_i^{\kappa_i},a'_i} \geq S^*_{i,C_i^{\kappa_i},a_i} - \frac{\sum_{l=1}^{\kappa_i} R_{i,C_i^l}
  (|C_i^l| + \kappa_i - l)}{\sum_{l=1}^{\kappa_i}
  R_{i,C_i^l}} \tau'_i - \frac{1}{3 \, \sum_{l=1}^{\kappa_i}
  R_{i,C_i^l}} \epsilon
\end{align*}
Substituting for $\tau'_i$, we obtain the following equation.
\begin{align*}
\tau'_i \sum_{l=1}^{\kappa_i} R_{i,C_i^l}
  (|C_i^l| + \kappa_i - l) = \frac{1}{3} \epsilon \;
\end{align*}
We can rewrite the above equation as follows.
\begin{align*}
\tau'_i \sum_{C \in \mathcal{C}_i} R_{i,C} (|C| + \kappa_i - l) = \frac{1}{3} \epsilon \;
\end{align*}
This completes the proof that $(p^*,S^*,E^*)$ is a solution to the refined
GMhG-induced CSP.

Now, for the second part of the theorem, suppose $(p^*,S^*,E^*)$ is a
solution of the refined GMhG-induced CSP. Then, the following holds for all $i$ and $a_i$.
\begin{align*}
\sum_{a'_i} p^*_{i,a'_i} S^*_{i,C_i^{\kappa_i},a'_i} \geq
S^*_{i,C_i^{\kappa_i},a_i} - \frac{2}{3 \, \sum_{l=1}^{\kappa_i}
  R_{i,C_i^l} } \epsilon \\
\sum_{a'_i} p^*_{i,a'_i} \left(  \frac{\sum_{l=1}^{\kappa_i}
  M'_{i,C_i^l}(a'_i,p^*_{C_i^l-i}) - l_{i,C_i^l}}{\sum_{l=1}^{\kappa_i}
  R_{i,C_i^l}} + \frac{\sum_{l=1}^{\kappa_i} R_{i,C_i^l}
  (|C_i^l|-1 + \kappa_i - l) \frac{\tau'_i}{2}}{\sum_{l=1}^{\kappa_i}
  R_{i,C_i^l}}\right) \geq \\
\left(  \frac{\sum_{l=1}^{\kappa_i}
  M'_{i,C_i^l}(a_i,p^*_{C_i^l-i}) - l_{i,C_i^l}}{\sum_{l=1}^{\kappa_i}
  R_{i,C_i^l}} - \frac{\sum_{l=1}^{\kappa_i} R_{i,C_i^l}
  (|C_i^l|-1 + \kappa_i - l) \frac{\tau'_i}{2}}{\sum_{l=1}^{\kappa_i}
  R_{i,C_i^l}} \right) - \frac{2}{3 \, \sum_{l=1}^{\kappa_i}
  R_{i,C_i^l} } \epsilon \\
M'_i(p^*_{N(i)}) \geq M'_i(a_i,p^*_{\Neigh{i}}) - \left( \sum_{l=1}^{\kappa_i} R_{i,C_i^l}
  (|C_i^l|-1 + \kappa_i - l) \tau'_i \right) - \frac{2}{3} \epsilon\\
M'_i(p^*_{N(i)}) \geq M'_i(a_i,p^*_{\Neigh{i}}) - \frac{1}{3} \epsilon - \frac{2}{3} \epsilon\\
M'_i(p^*_{N(i)}) \geq M'_i(a_i,p^*_{\Neigh{i}}) - \epsilon
\end{align*}
Hence, the corresponding joint mixed-strategy 
$p^*$ is an $\epsilon$-MSNE of the GMhG.
\end{proof}
\begin{claim}
\label{cla:joint_sd_refined}
Within the context of Theorem~\ref{thm:joint_sd_refined}, we have
\begin{align*} 
s_i = & O\left( \frac{m \, \kappa'  \, \max_{j \in \Neigh_i} \sum_{C \in
  \C_j} R_{j,C}}{\epsilon} \right) = O\left( \frac{m \, \kappa'  \, \max_{j \in \Neigh_i} R'_j}{\epsilon} \right) = O\left( \frac{m \, \kappa' \,  R'
        }{\epsilon} \right) \; \text{ and} \\
s'_i = & O\left( \frac{R'_i \, (\kappa'_i + \kappa_i)}{\epsilon}
         \right) 
\; ,
\end{align*}
where $R'_i \equiv \sum_{C \in
  \C_i} R_{i,C}$ and $R' \equiv \max_{j \in V} R'_j$.  
If all the
ranges $R_{j,C}$'s are
bounded by a constant, then we have $R' = O(\kappa)$ and $R'_i =
O(\kappa_i) = O(\kappa)$,
which implies
\begin{align*} 
s_i = O\left( \frac{m \, \kappa' \, \kappa }{\epsilon} \right) \;
       \text{ and  }
s'_i = O\left( \frac{ \kappa \, (\kappa' + \kappa) }{\epsilon} \right) \; .
\end{align*}
\end{claim}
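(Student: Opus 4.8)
The plan is to treat this as a direct asymptotic bounding of the two closed-form discretization sizes produced by Theorem~\ref{thm:joint_sd_refined}, namely $s_i = \lceil 1/\tau_i \rceil$ and $s'_i = \lceil 1/\tau'_i \rceil$. I would read $\tau'_i$ in the summed form that the proof of that theorem actually establishes, i.e.\ $\tau'_i = \epsilon / \left( 3 \sum_{l=1}^{\kappa_i} R_{i,C_i^l}(|C_i^l| + \kappa_i - l) \right)$, since the lone index $l$ appearing in the statement's display is the summation index. The whole claim then reduces to replacing each quantity inside these two expressions by a clean upper bound in terms of $\kappa$, $\kappa'$, $m$, $R'_i$, and $R'$, and dropping the ceiling (which contributes only an additive $1$, absorbed by the $O(\cdot)$).

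For $s_i$, I would start from $s_i = O\!\left( |A_i| \, \max_{j \in \Neigh_i} \sum_{C \in \C_j} R_{j,C}(|C|-1) / \epsilon \right)$. Using $|A_i| \le m$ and $|C| - 1 \le \kappa'_j - 1 \le \kappa' - 1 = O(\kappa')$ for every $C \in \C_j$, I factor the size bound out of the sum to get $\sum_{C \in \C_j} R_{j,C}(|C|-1) \le (\kappa'-1) \sum_{C \in \C_j} R_{j,C} = (\kappa'-1) R'_j$. This yields the first chain of equalities, the last step being $\max_{j \in \Neigh_i} R'_j \le \max_{j \in V} R'_j = R'$. For $s'_i$, I would start from $s'_i = O\!\left( \sum_{l=1}^{\kappa_i} R_{i,C_i^l}(|C_i^l| + \kappa_i - l)/\epsilon \right)$ and bound the weight uniformly: since $|C_i^l| \le \kappa'_i$ and $\kappa_i - l \le \kappa_i - 1$ for $l \ge 1$, we have $|C_i^l| + \kappa_i - l \le \kappa'_i + \kappa_i - 1 = O(\kappa'_i + \kappa_i)$, so the sum is at most $(\kappa'_i + \kappa_i) \sum_{C \in \C_i} R_{i,C} = (\kappa'_i + \kappa_i) R'_i$, giving $s'_i = O(R'_i(\kappa'_i + \kappa_i)/\epsilon)$.

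Finally, for the constant-range specialization, I would use that $R_{j,C} = O(1)$ for all $j,C$ forces $R'_i = \sum_{C \in \C_i} R_{i,C} \le \kappa_i \cdot O(1) = O(\kappa_i) = O(\kappa)$, and hence $R' = \max_j R'_j = O(\kappa)$ as well. Substituting $R' = O(\kappa)$ into the $s_i$ bound gives $s_i = O(m \kappa' \kappa / \epsilon)$, and substituting $R'_i = O(\kappa)$ together with $\kappa'_i \le \kappa'$ and $\kappa_i \le \kappa$ into the $s'_i$ bound gives $s'_i = O(\kappa(\kappa' + \kappa)/\epsilon)$. There is no real obstacle here: the entire argument is a sequence of term-by-term majorizations, and the only points needing minor care are correctly interpreting the summation implicit in $\tau'_i$ and noting that the ceilings and the $-1$ offsets are all asymptotically negligible.
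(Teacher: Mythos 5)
Your proposal is correct, and in fact the paper states Claim~\ref{cla:joint_sd_refined} without any proof, so there is nothing to diverge from: your term-by-term majorization ($|A_i|\le m$, $|C|-1\le \kappa'-1$, $|C_i^l|+\kappa_i-l\le \kappa'_i+\kappa_i-1$, then $R'_j\le \kappa_j\cdot O(1)$ in the bounded-range case) is exactly the intended argument, and your reading of the dangling index $l$ in $\tau'_i$ as the summation index over the ordered cliques $C_i^1,\ldots,C_i^{\kappa_i}$ is the right one, as confirmed by the identity $\tau'_i\sum_{l=1}^{\kappa_i}R_{i,C_i^l}(|C_i^l|+\kappa_i-l)=\epsilon/3$ used in the proof of Theorem~\ref{thm:joint_sd_refined}. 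It is worth noting that this refined claim is actually easier than the simple-version Claim~\ref{cla:joint_sd}, whose proof needs a sign case analysis on $l_j$ to bound $\bar{R}_j$; here $R'_i$ is a plain sum of individually bounded ranges, so no such case analysis arises, and your proof is complete as written.
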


\subsection{Sparse-Discretization-Based DP for Graphical Games in Normal-Form with Tree Graphs}
\label{sec:DPGG}

This appendix is analogous to the DP presented in the main body,
but deals with \emph{normal-form GGs,} instead of polymatrix GGs. We refer
the reader to the introduction to the DP framework for general context
and notation. Because we are dealing with standard graphical games in
normal form, we can assume without loss of generality that the local
payoff matrices are properly normalized to $[0,1]$.

\paragraph{Collection Pass.} Recursively, for each node $i$ in the
induced directed tree, relative to the root, denote by $j =
\pa(i)$. Order $\Ch(i)$ and denote the resulting node order by
$o_1,\ldots,o_{|\Ch(i)|}$. Apply the
following DP from leaves to root: for each arc $(j,i)$ in the
designated-root-induced directed tree, and $(p_i,p_j)$
a mixed-strategy pair in the induced grid,
\begin{align*}
T_{i \to j}(p_i,p_j) = & \max_{S_{o_{|\Ch(i)|}}} B_i(p_i,p_j,E_{o_{|\Ch(i)|}})
  + R_{o_{|\Ch(i)|}}(p_i,E_{o_{|\Ch(i)|}}) \text{ and}\\
W_{i \to j}(p_i,p_j) = & \argmax_{S_{o_{|\Ch(i)|}}}
                         B_i(p_i,p_j,E_{o_{|\Ch(i)|}}) +
                         R_{o_{|\Ch(i)|}}(p_i,E_{o_{|\Ch(i)|}}) \; ,
\end{align*}
where $B_i(p_i,p_j,E_{o_{|\Ch(i)|}}) =$
\begin{align*}
\sum_{a_i}
\log \mathbbm{1}\left[ \sum_{a'_i,a'_j} p_i(a'_i) p_j(a'_j)
    E_{o_{|\Ch(i)|}}(a'_i,a'_j) \geq \sum_{a'_j} p_j(a'_j)
    E_{o_{|\Ch(i)|}}(a_i,a'_j) -
      \epsilon \right]
\end{align*}
and, for $l = 1,\ldots,|\Ch(i)|$, 
\begin{align*}
V_{o_l}(E_{o_l},p_{o_l},E_{o_{l-1}}) =& \sum_{a_{N_i-[o_l]}}
                                        \log \mathbbm{1}\left[E_{o_l}(a_{N_i-[o_l]})
                                        = \Proji{\sum_{a_{o_l}}
                                        p_{o_l}(a_{o_l}) 
    E_{o_l}(a_{N_i-[o_{l-1}]})} \right] \; , \\
F_{o_l}(p_i,E_{o_l},p_{o_l},E_{o_{l-1}}) =&T_{o_l \to i}(p_{o_l},p_i)
                                            +
                                            R_{o_{l-1}}(p_i,E_{o_{l-1}})
                                            +
                                            V_{o_l}(E_{o_l},p_{o_l},E_{o_{l-1}})
  \; , \\
R_{o_l}(p_i,E_{o_l}) =& \max_{p_{o_l},E_{o_{l-1}}}
                            F_{o_l}(p_i,E_{o_l},p_{o_l},E_{o_{l-1}})
                        \; , \text{ and} \\
W_{o_l}(p_i,E_{o_l}) =& \argmax_{p_{o_l},E_{o_{l-1}}}
                            F_{o_l}(p_i,E_{o_l},p_{o_l},E_{o_{l-1}})
                        \; .
\end{align*}
Note that we are using the following boundary conditions for
simplicity of presentation: $R_{o_0} \equiv 0$ and, for all $a_{N_i}$, $E_{o_0}(a_{N_i}) \equiv \Proji{M'_i(a_{N_i})}$,
so that
\begin{align*}
F_{o_1}(p_i,E_{o_1},p_{o_1},E_{o_0}) \equiv&
  F_{o_1}(p_i,E_{o_1},p_{o_1})\\
 =&
 T_{o_1 \to i}(p_{o_1},p_i) +\\
&
  \sum_{a_{N_i-o_1}} \log\mathbbm{1}\left[E_{o_1}(a_{N_i-o_1}) = 
      \Proji{\sum_{a_{o_1}} p_{o_1}(a_{o_1}) E_{o_0}(a_{N_i})}\right] \; .
\end{align*}
  If $i$ is the designated root, then, because there is no
  corresponding parent $j$, we have
  $T_{i \to j}(p_i,p_j) \equiv T_i(p_i)$ and
  $W_{i \to j}(p_i,p_j) \equiv W_i(p_i)$.

\paragraph{Assignment Pass.} For the root $i$, set $p^*_i \in
\argmax_{p_i} T_i(p_i)$ and $E^*_{o_{|\Ch(i)|}} \in W_i(p^*_i)$, where
$o_{|\Ch(i)|}$ is the last node in the order of the root's children $\Ch(i)$. Then recursively apply the following
assignment process starting at $o_{|\Ch(i)|}$: for
$l=|\Ch(i)|,\ldots,1$, set $(p^*_{o_l},E^*_{o_{l-1}}) \in
W_{o_l}(p^*_i,E^*_{o_l})$.

\emph{Note that for the case of polymatrix GGs the DP above would be
  essentially the same as that presented in
  in the main body.}

\subsubsection{Running Time of DP Algorithm for Tree Graphical Games in Normal-Form}

The worst-case running-time for message passing at each node $i$ is
\begin{align*}
& \left( \frac{m k}{\epsilon} \right)^{2m} \times \sum_{l=1}^{|\Ch(i)|}
  O\left( \left( \frac{m k}{\epsilon} \right)^{m} \left( m^{|\Ch(i)|-l}
  \right)^{2} m^2 \left( \frac{k}{\epsilon} \right)^{2} \right) \\
= & \left( \frac{m k}{\epsilon} \right)^{3m+2} \times \sum_{r=0}^{|\Ch(i)|-1}
  O\left( \left(m^2\right)^r \right) \\
= & \left( \frac{m k}{\epsilon} \right)^{3m+2} 
  O\left( \frac{\left(m^2\right)^{|\Ch(i)|} -1 }{m^2 - 1} \right) \\
= & \left( \frac{m k}{\epsilon} \right)^{3m+2} 
  O\left( \left(m^2\right)^{(k_i-2)-1}  \right) \\
= & \left( \frac{m k}{\epsilon} \right)^{3m+2} 
  O\left( m^{2 k_i-6} \right)
\end{align*}
if $i$ is an internal node, $O\left( m^2 \left( \frac{m k}{\epsilon}
  \right)^{2m} \right)$
if $i$ is a leaf, and $\left( \frac{m k}{\epsilon} \right)^{2m+2} 
  O\left( m^{2 k_i-4} \right)$ if $i$ is the root,
from which the running-time result of Theorem 5 follows.

\section{Concluding Remarks}
We have presented tractable algorithms for computing $\epsilon$-MSNE in tree-structured GMhGs when the number of actions is bounded. 
The implications of our results can best be highlighted by considering a very simple 101-node star polymatrix game with a constant number of actions. For computing an $\epsilon$-MSNE of this game, the algorithm of \citet{Kearns_et_al_2001} takes $O(((\frac{2^{k+2}k \log k}{\epsilon})^2)^k)$ time (here $k = 100$), the algorithm of \citet{ortiz14} takes $O((\frac{k}{\epsilon})^k)$ time, and ours takes $O(poly(\frac{k}{\epsilon}))$ time and thereby solves a 15-year-old open problem. We conclude by emphasizing that our DP algorithm is simple to implement and that simplicity is a strength of this work.

\bibliographystyle{theapa}
\bibliography{games}

\def\polhk#1{\setbox0=\hbox{#1}{\ooalign{\hidewidth
  \lower1.5ex\hbox{`}\hidewidth\crcr\unhbox0}}}
\begin{thebibliography}{}

\bibitem[\protect\BCAY{Aumann}{Aumann}{1974}]{aumann74}
Aumann, R. \BBOP1974\BBCP.
\newblock \BBOQ Subjectivity and correlation in randomized strategies\BBCQ\
\newblock {\Bem Journal of Mathematical Economics}, {\Bem 1\/}(1), 67--96.

\bibitem[\protect\BCAY{Aumann}{Aumann}{1987}]{aumann87}
Aumann, R.~J. \BBOP1987\BBCP.
\newblock \BBOQ {Correlated Equilibrium as an Expression of Bayesian
  Rationality}\BBCQ\
\newblock {\Bem Econometrica}, {\Bem 55\/}(1), 1--18.

\bibitem[\protect\BCAY{Barman, Ligett,\ \BBA\ Piliouras}{Barman
  et~al.}{2015}]{barman15}
Barman, S., Ligett, K., \BBA\ Piliouras, G. \BBOP2015\BBCP.
\newblock \BBOQ Approximating {Nash} equilibria in tree polymatrix games\BBCQ\
\newblock In Hoefer, M.\BED, {\Bem Algorithmic Game Theory: 8th International
  Symposium, SAGT 2015, Saarbr{\"u}cken, Germany, September 28-30, 2015.
  Proceedings}, \BPGS\ 285--296, Berlin, Heidelberg. Springer Berlin
  Heidelberg.

\bibitem[\protect\BCAY{Chan\ \BBA\ Ortiz}{Chan\ \BBA\ Ortiz}{2015}]{hau_aaai15}
Chan, H.\BBACOMMA\  \BBA\ Ortiz, L. \BBOP2015\BBCP.
\newblock \BBOQ Computing nash equilibrium in interdependent defense
  games\BBCQ\
\newblock In {\Bem AAAI Conference on Artificial Intelligence}.

\bibitem[\protect\BCAY{Chen, Deng,\ \BBA\ Teng}{Chen
  et~al.}{2009}]{ChenSettlingJACM}
Chen, X., Deng, X., \BBA\ Teng, S.-H. \BBOP2009\BBCP.
\newblock \BBOQ Settling the complexity of computing two-player {Nash}
  equilibria\BBCQ\
\newblock {\Bem J. ACM}, {\Bem 56\/}(3), 14:1--14:57.

\bibitem[\protect\BCAY{Dani\ \BBA\ Deep}{Dani\ \BBA\
  Deep}{2010}]{doi:10.1080/13675567.2010.518564}
Dani, S.\BBACOMMA\  \BBA\ Deep, A. \BBOP2010\BBCP.
\newblock \BBOQ Fragile food supply chains: reacting to risks\BBCQ\
\newblock {\Bem International Journal of Logistics Research and Applications},
  {\Bem 13\/}(5), 395--410.

\bibitem[\protect\BCAY{Daskalakis, Goldberg,\ \BBA\ Papadimitriou}{Daskalakis
  et~al.}{2009a}]{DaskalakisCACM}
Daskalakis, C., Goldberg, P.~W., \BBA\ Papadimitriou, C.~H. \BBOP2009a\BBCP.
\newblock \BBOQ The complexity of computing a {Nash} equilibrium\BBCQ\
\newblock {\Bem Commun. ACM}, {\Bem 52\/}(2), 89--97.

\bibitem[\protect\BCAY{Daskalakis, Goldberg,\ \BBA\ Papadimitriou}{Daskalakis
  et~al.}{2009b}]{daskalakis:195}
Daskalakis, C., Goldberg, P.~W., \BBA\ Papadimitriou, C.~H. \BBOP2009b\BBCP.
\newblock \BBOQ The complexity of computing a {Nash} equilibrium\BBCQ\
\newblock {\Bem SIAM Journal on Computing}, {\Bem 39\/}(1), 195--259.

\bibitem[\protect\BCAY{Daskalakis, Mehta,\ \BBA\ Papadimitriou}{Daskalakis
  et~al.}{2007}]{Daskalakis_et_al_2007}
Daskalakis, C., Mehta, A., \BBA\ Papadimitriou, C. \BBOP2007\BBCP.
\newblock \BBOQ Progress in approximate {N}ash equilibria\BBCQ\
\newblock In {\Bem EC '07: Proceedings of the 8th ACM Conference on Electronic
  Commerce}, \BPGS\ 355--358, New York, NY, USA. ACM.

\bibitem[\protect\BCAY{Dechter}{Dechter}{2003a}]{dechter_book}
Dechter, R. \BBOP2003a\BBCP.
\newblock {\Bem Constraint Processing}.
\newblock Morgan Kaufmann.

\bibitem[\protect\BCAY{Dechter}{Dechter}{2003b}]{Dechter_2003}
Dechter, R. \BBOP2003b\BBCP.
\newblock {\Bem Constraint Processing}.
\newblock Morgan Kaufmann.

\bibitem[\protect\BCAY{Deligkas, Fearnley, Savani,\ \BBA\ Spirakis}{Deligkas
  et~al.}{2014}]{deligkas2014}
Deligkas, A., Fearnley, J., Savani, R., \BBA\ Spirakis, P. \BBOP2014\BBCP.
\newblock \BBOQ Computing approximate {Nash} equilibria in polymatrix
  games\BBCQ\
\newblock In Liu, T.-Y., Qi, Q., \BBA\ Ye, Y.\BEDS, {\Bem Web and Internet
  Economics: 10th International Conference, WINE 2014, Beijing, China, December
  14-17, 2014. Proceedings}, \BPGS\ 58--71, Cham. Springer International
  Publishing.

\bibitem[\protect\BCAY{Dvijotham, Chertkov, Van~Hentenryck, Vuffray,\ \BBA\
  Misra}{Dvijotham et~al.}{2016}]{Dvijotham2016}
Dvijotham, K., Chertkov, M., Van~Hentenryck, P., Vuffray, M., \BBA\ Misra, S.
  \BBOP2016\BBCP.
\newblock \BBOQ Graphical models for optimal power flow\BBCQ\
\newblock {\Bem Constraints}, 1--26.

\bibitem[\protect\BCAY{Gottlob, Greco,\ \BBA\ Scarcello}{Gottlob
  et~al.}{2014}]{gottlob14}
Gottlob, G., Greco, G., \BBA\ Scarcello, F. \BBOP2014\BBCP.
\newblock \BBOQ Treewidth and hypertree width\BBCQ\
\newblock In Bordeaux, L., Hamadi, Y., \BBA\ Kohli, P.\BEDS, {\Bem
  Tractability: Practical Approaches to Hard Problems}, \BPGS\ 3--38. Cambridge
  University Press, Cambridge.

\bibitem[\protect\BCAY{Gottlob, Greco, Leone,\ \BBA\ Scarcello}{Gottlob
  et~al.}{2016}]{gottlob16}
Gottlob, G., Greco, G., Leone, N., \BBA\ Scarcello, F. \BBOP2016\BBCP.
\newblock \BBOQ Hypertree decompositions: Questions and answers\BBCQ\
\newblock In {\Bem Proceedings of the 35th ACM SIGMOD-SIGACT-SIGAI Symposium on
  Principles of Database Systems}, PODS '16, \BPGS\ 57--74, New York, NY, USA.
  ACM.

\bibitem[\protect\BCAY{Kearns}{Kearns}{2007}]{Kearns_2007}
Kearns, M. \BBOP2007\BBCP.
\newblock \BBOQ Graphical games\BBCQ\
\newblock In Nisan, N., Roughgarden, T., Tardos, E., \BBA\ Vazirani,
  V.~V.\BEDS, {\Bem Algorithmic Game Theory}, \BCH~7, \BPGS\ 159--180.
  Cambridge University Press.

\bibitem[\protect\BCAY{Kearns, Littman,\ \BBA\ Singh}{Kearns
  et~al.}{2001}]{Kearns_et_al_2001}
Kearns, M.~J., Littman, M.~L., \BBA\ Singh, S.~P. \BBOP2001\BBCP.
\newblock \BBOQ Graphical models for game theory\BBCQ\
\newblock In {\Bem UAI '01: Proceedings of the 17th Conference in Uncertainty
  in Artificial Intelligence}, \BPGS\ 253--260, San Francisco, CA, USA. Morgan
  Kaufmann Publishers Inc.

\bibitem[\protect\BCAY{Koller\ \BBA\ Friedman}{Koller\ \BBA\
  Friedman}{2009}]{Koller+Friedman:09}
Koller, D.\BBACOMMA\  \BBA\ Friedman, N. \BBOP2009\BBCP.
\newblock {\Bem Probabilistic Graphical Models: Principles and Techniques}.
\newblock MIT Press.

\bibitem[\protect\BCAY{Lipton, Markakis,\ \BBA\ Mehta}{Lipton
  et~al.}{2003}]{Lipton_et_al_2003}
Lipton, R.~J., Markakis, E., \BBA\ Mehta, A. \BBOP2003\BBCP.
\newblock \BBOQ Playing large games using simple strategies\BBCQ\
\newblock In {\Bem EC '03: Proceedings of the Fourth ACM Conference on
  Electronic Commerce}, \BPGS\ 36--41, New York, NY, USA. ACM.

\bibitem[\protect\BCAY{Ortiz}{Ortiz}{2014}]{ortiz14}
Ortiz, L.~E. \BBOP2014\BBCP.
\newblock \BBOQ On sparse discretization for graphical games\BBCQ\
\newblock {\Bem CoRR}, {\Bem abs/1411.3320}.

\bibitem[\protect\BCAY{Papadimitriou\ \BBA\ Roughgarden}{Papadimitriou\ \BBA\
  Roughgarden}{2008}]{Papadimitriou:2008:CCE:1379759.1379762}
Papadimitriou, C.~H.\BBACOMMA\  \BBA\ Roughgarden, T. \BBOP2008\BBCP.
\newblock \BBOQ Computing correlated equilibria in multi-player games\BBCQ\
\newblock {\Bem J. ACM}, {\Bem 55\/}(3), 14:1--14:29.

\bibitem[\protect\BCAY{Rubinstein}{Rubinstein}{2015}]{rubinstein15}
Rubinstein, A. \BBOP2015\BBCP.
\newblock \BBOQ Inapproximability of {Nash} equilibrium\BBCQ\
\newblock In {\Bem Proceedings of the Forty-seventh Annual ACM Symposium on
  Theory of Computing}, STOC '15, \BPGS\ 409--418, New York, NY, USA. ACM.

\bibitem[\protect\BCAY{Russell\ \BBA\ Norvig}{Russell\ \BBA\
  Norvig}{2003}]{Russell_and_Norvig_2003}
Russell, S.~J.\BBACOMMA\  \BBA\ Norvig, P. \BBOP2003\BBCP.
\newblock {\Bem Artificial Intelligence: {A} Modern Approach}.
\newblock Pearson Education.

\bibitem[\protect\BCAY{Tsaknakis\ \BBA\ Spirakis}{Tsaknakis\ \BBA\
  Spirakis}{2008}]{tsaknakis2008}
Tsaknakis, H.\BBACOMMA\  \BBA\ Spirakis, P.~G. \BBOP2008\BBCP.
\newblock \BBOQ An optimization approach for approximate {Nash}
  equilibria\BBCQ\
\newblock {\Bem Internet Math.}, {\Bem 5\/}(4), 365--382.

\bibitem[\protect\BCAY{Weber}{Weber}{1948}]{weber1946}
Weber, M. \BBOP1948\BBCP.
\newblock {\Bem From Max Weber: Essays in Sociology. Translated, Edited, and
  with an Introduction by H. H. Gerth and C. Wright Mills}.
\newblock Oxford University Press, New York.

\end{thebibliography}

\end{document}